\setlist[enumerate]{label=(\roman*)}  
\setlist[enumerate,2]{label=(\alph*)}  
\newlength{\myparskip}
\newcounter{todocounter}
\newtheorem{theorem}{Theorem}[section]
\newtheorem{lemma}[theorem]{Lemma}
\newtheorem{proposition}[theorem]{Proposition}
\newtheorem{definition}[theorem]{Definition}
\theoremstyle{definition}
\theoremstyle{remark}
\newtheorem{remark}[theorem]{Remark}
\newtheoremstyle{question}{\myparskip}{\myparskip}{\color{BrickRed}\normalfont}{0pt}{\bfseries}{.}{5pt plus 1pt minus 1pt}{}
\theoremstyle{question}
\newtheoremstyle{answer}{\myparskip}{\myparskip}{\color{PineGreen}\normalfont}{0pt}{\bfseries}{.}{5pt plus 1pt minus 1pt}{\thmname{#1}\thmnote{ \bfseries #3}}
\theoremstyle{answer}
\definecolor{ao}{rgb}{0.0, 0.0, 1.0}
\definecolor{blue(pigment)}{rgb}{0.2, 0.2, 0.6}
\definecolor{green(pigment)}{rgb}{0.1, 0.5, 0.1}
\definecolor{ceruleanblue}{rgb}{0.16, 0.32, 0.75}
\definecolor{amber}{rgb}{1.0, 0.75, 0.0}
\definecolor{ao(english)}{rgb}{0.0, 0.5, 0.0}
\definecolor{lava}{rgb}{0.81, 0.06, 0.13}
\let\originalleft\left
\let\originalright\right
\renewcommand{\left}{\mathopen{}\mathclose\bgroup\originalleft}
\renewcommand{\right}{\aftergroup\egroup\originalright}
\newcommand{\coloniff}{\;\ratio\Longleftrightarrow\;}
\newcommand{\I}{\mathbb{I}}
\newcommand{\V}{\mathsf{V}}
\newcommand{\Z}{\mathsf{Z}}
\newcommand{\Q}{\mathsf{Q}}
\renewcommand{\S}{\mathsf{S}}
\newcommand{\A}{\mathsf{A}}
\newcommand{\D}{\mathsf{D}}
\newcommand{\C}{\mathsf{C}}
\renewcommand{\P}{\mathsf{P}}
\newcommand{\F}{\mathsf{F}}
\newcommand{\W}{\mathsf{W}}
\newcommand{\U}{\mathsf{U}}
\newcommand{\id}{\mathds{1}}
\newcommand{\im}{\mathrm{im}}
\newcommand{\ph}{\mathord{\rule[-0.05em]{0.6em}{0.05em}}}				
\NewDocumentCommand{\colvec}{m}{ 
	\begin{pmatrix}
		\seq_set_from_clist:Nn \l_tmpa_seq { #1 }
		\seq_use:Nn \l_tmpa_seq { \\ }
\end{pmatrix}}
\DeclarePairedDelimiter{\abs}{\lvert}{\rvert}
\DeclarePairedDelimiter{\norm}{\lVert}{\rVert}
\DeclarePairedDelimiterXPP{\pnorm}[2]{}{\lVert}{\rVert}{_{#1}}{#2}
\let\oldabs\abs
\def\abs{\@ifstar{\oldabs}{\oldabs*}}
\let\oldnorm\norm
\def\norm{\@ifstar{\oldnorm}{\oldnorm*}}
\let\oldpnorm\pnorm
\def\pnorm{\@ifstar{\oldpnorm}{\oldpnorm*}}
\DeclareMathOperator{\Span}{span}
\providecommand{\given}{}			
\newcommand{\SetSymbol}[1][]{%
\nonscript\;\,#1\vert
\allowbreak
\nonscript\;\,
\mathopen{}
}
\DeclarePairedDelimiterX{\Set}[1]{\{}{\}}{%
\renewcommand{\given}{\SetSymbol[\delimsize]}
#1
}
\let\oldSet\Set
\def\Set{\@ifstar{\oldSet}{\oldSet*}}
\newsavebox{\numbox}%
\newsavebox{\slashbox}%
\newsavebox{\denbox}%
\newlength{\slashlength}%
\newlength{\faktorscale}%
\DeclareDocumentCommand{\newfaktor}{m O{0.35} m O{-0.35}}{
	\savebox{\numbox}{\ensuremath{#1}}
	\savebox{\slashbox}{\ensuremath{\diagup}}
	\savebox{\denbox}{\ensuremath{#3}}
	\setlength{\faktorscale}{0.5\ht\numbox+0.5\ht\denbox}%
	\setlength{\slashlength}{2pt+0.8\faktorscale+#2\faktorscale-#4\faktorscale}%
	\raisebox{#2\ht\slashbox}{\usebox{\numbox}}
	\mkern-2mu%
	\rotatebox{-30}{\rule[#4\ht\denbox]{0.4pt}{\slashlength}}
	\mkern9mu%
	\hspace{-0.44\slashlength}%
	\raisebox{#4\ht\denbox}{\usebox{\denbox}}
}
\DeclareDocumentCommand{\linefaktor}{m O{0.08} m O{-0.08}}{
	\savebox{\numbox}{\ensuremath{#1}}
	\savebox{\slashbox}{\ensuremath{\diagup}}
	\savebox{\denbox}{\ensuremath{#3}}
	\setlength{\faktorscale}{0.5\ht\numbox+0.5\ht\denbox}%
	\setlength{\slashlength}{0.2\faktorscale+0.8\baselineskip}%
	\raisebox{#2\ht\slashbox}{\usebox{\numbox}}
	\mkern-1mu%
	\raisebox{-0.8pt}{%
		\rotatebox{-25}{\rule[#4\ht\denbox]{0.4pt}{\slashlength}} 
	}%
	\mkern-1mu%
	\hspace{-0.25\slashlength}%
	\raisebox{#4\ht\denbox}{\usebox{\denbox}}
}
\begin{document}

\title{Epistemic Horizons From Deterministic Laws: Lessons From a Nomic Toy Theory}
\author[1]{Johannes Fankhauser\thanks{\href{mailto:johannes.j.fankhauser@gmail.com}{\it johannes.j.fankhauser@gmail.com}}}
\author[1]{Tom\'a\v{s} Gonda\thanks{\href{mailto:tomas.gonda@uibk.ac.at}{\it tomas.gonda@uibk.ac.at}}}
\author[1]{Gemma De les Coves\thanks{\href{mailto:gemmadelescoves@gmail.com}{\it gemmadelescoves@gmail.com}}}

\affil[1]{Institute for Theoretical Physics, University of Innsbruck, Austria}
\date{\today}
\renewcommand\Affilfont{\itshape\small}

\maketitle

\begin{abstract}
	Quantum theory has an epistemic horizon, i.e.\ exact values cannot be assigned simultaneously to incompatible physical quantities.		
	As shown by Spekkens' toy theory, positing an epistemic horizon akin to Heisenberg's uncertainty principle in a classical mechanical setting also leads to a plethora of quantum phenomena.		
	We introduce a deterministic theory\,---\,nomic toy theory\,---\,in which information gathering agents are explicitly modelled as physical systems. 
	Our main result shows the presence of an epistemic horizon for such agents.
	They can only simultaneously learn the values of observables whose Poisson bracket vanishes.
	Therefore, nomic toy theory has incompatible measurements and the complete state of a physical system cannot be known.
	The best description of a system by an agent is via an epistemic state of Spekkens' toy theory.	
	Our result reconciles us to measurement uncertainty as an aspect  of the inseparability of subjects and objects. 
	Significantly, the claims follow even though nomic toy theory is essentially classical. 
	This work invites further investigations of epistemic horizons, such as the one of (full) quantum theory. 

\end{abstract}


\section{Introduction}
\label{sec:intro}

	Whenever an agent cannot obtain a complete account of a physical phenomenon, we shall speak of an \emph{epistemic horizon}. 
	A standard example is the Heisenberg uncertainty principle.  
	A qualitative consequence is that, given two incompatible measurements\footnotemark{} of a quantum system, an agent can only be certain about the outcome of at most one of the measurements. 
	\footnotetext{Two measurements of a quantum system described by a Hermitian operator are incompatible if the operators do not commute.}%
There is a multitude of ways in which uncertainty about a physical system, and thus an epistemic horizon, can emerge.

	\paragraph{Sources of Epistemic Horizons.} One potential source of uncertainty arises in \textit{chaotic} systems, which exhibit high sensitivity to initial conditions. 
	Unpredictability of such systems follows due to the unavoidable inaccuracy of any specification of boundary conditions (see, for instance, \cite{Batterman1993}).

	Learning about a non-chaotic system may still be intractable because of technological limits to measurement precision. 
	Moreover, its behaviour may be unpredictable as a result of its astronomical computational complexity. 
	In both cases, the lack of knowledge an agent has about the system is connected to \textit{practical} considerations contingent on technological advances.

	\textit{Logical} `paradoxes' present another source of epistemic horizons. 
	Self-referential reasoning has been employed to establish links with undecidability, uncomputability, and randomness \cite{Svozil-acausality, szangolis-epistemic-horizon, Chiara1977-DALLSR}.
	For example, the work of Bendersky et al.\ suggests that quantum randomness must be uncomputable \cite{Bendersky-uncomputable-randomness}.
	A similar conclusion was drawn in \cite{delSanto-Gisin-randomness}, based on the idea of finite representability.

	In the context of the theory of general relativity, it has been claimed that there is an upper bound on \emph{information density}. 
	See, for instance, Bekenstein's result expressing the maximum amount of information in a bounded system \cite{Bekenstein-bounded-systems}. 
	Thus, an epistemic horizon can arise from the nature of spacetime itself for agents of bounded size.

	There are also more exotic possibilities. 
	In Everettian Quantum Mechanics and Many Worlds interpretations of quantum theory, all possible outcomes of a given measurement actually happen and are experienced independently in parallel worlds. 
	Nevertheless, our single-world experience carries a \textit{self-locating} uncertainty, which leads to uncertainty about the outcome that can be described probabilistically \cite{many-worlds-barrett-et-al}.

	In a \textit{causally indeterministic} world there is a fundamental epistemic horizon. 
	This means that events need not be pre-determined by preceding conditions together with the laws of nature \cite{sep-determinism-causal}.

	Yet another source of uncertainty is the nature of \textit{dynamical} laws. 
	For instance, in an extreme scenario of a physical theory with two types of systems without coupling, a system of one type cannot learn about the behaviour of systems of the other type when learning is mediated by interactions.
	A remote yet far-reaching example is the part of the Universe we will never interact with, which includes all systems beyond the horizon from which no information can reach us.

	Still, even in the presence of non-trivial interactions, learning faces limitations. 
	In this work we study an epistemic horizon in the context of a specific physical theory introduced below as nomic toy theory. 
	In particular, we prove that in this theory, one physical system can only obtain constrained information about another. 
	Similar to quantum theory, measurements in nomic toy theory can exhibit incompatibility. 
	Their outcomes cannot be known simultaneously by agents modelled as systems within the theory. 
	The nature of interactions of nomic toy theory thus impacts the information gathering activities of agents and entails fundamental limits to what can be known about the world.
	
	\paragraph{Dynamical Epistemic Horizons.} 
	In classical mechanics the values of the positions and momenta of all particles at a certain time, together with the physical laws, are purported to fully determine their entire future (and past) values. 
	Moreover, so the story goes, the values at a given time can be precisely measured. 
	A principal articulation of such causal determinism is the omniscient intellect of Laplace's demon \cite{Laplace1814, sep-determinism-causal}.
	
	In Newtonian physics one often ignores an explicit account of measurement interactions and that they necessarily disturb the system being measured \cite[Chapter 3]{Barad-meeting-the-universe-halfway}.	
	This is traditionally justified by stipulating that the disturbance is determinable and thus can be accounted for.
	Adjusting one's measurement record based on known disturbance\,---\,if indeed possible\,---\,allows an agent to acquire arbitrary information about a system. 
	Particularly in the context of quantum theory, measurements are said to introduce disturbance. Heisenberg's uncertainty relations were interpreted by himself as originating from an inevitable measurement disturbance: 
	Whatever pre-determines the outcome of a measurement of a particle is inadvertently disturbed by its interaction with the apparatus \cite{Heisenberg-uncertainty-principles-1925}.\footnote{However, this early account of Heisenberg's uncertainty is but one possible interpretation. 
	The properties that `cause' an individual measurement result need not exist in a quantum world. 
	In particular, it is unclear whether a single particle can be said to possess  properties of position and momentum prior to measurement (see, for example, \cite[Section 6.1]{fankhauser2022a}). Thus, one cannot straightforwardly argue that such properties (since they do not exist) would be disturbed in a measurement.}
	
	Based on complementarity, i.e.\ the existence of mutually exclusive experimental arrangements, Niels Bohr argued that the measurement disturbance in quantum theory cannot be accounted for. According to him, this is due to discontinuous quantum jumps \cite{Bohr_1937}. Thus, the discrete nature of measurement interactions spoils determinism and predictability.\footnotemark{}
	\footnotetext{Later, Heisenberg in part conceded to Bohr's views and acknowledged complementarity as the source of uncertainty (cf.\ \cite{wheeler-zurek-1983} on Heisenberg's postscript to his uncertainty article).}
	
	One perspective on our work is that it provides an account of uncertainty in Spekkens' toy theory (which reproduces stabiliser states in quantum \mbox{theory \cite{pusey2012stabilizer,catani2017spekkens}}) in terms of dynamical measurement disturbance. 
	In a nutshell, there is a classical theory\,---\,the ontological model of Spekkens' toy theory\,---\,whose deterministic laws entail an epistemic horizon. In this sense, Heisenberg's original interpretation of uncertainty can be said to apply in the case of stabiliser quantum theory.
	
	\paragraph{Spekkens' Toy Theory.} In 2004 Robert Spekkens conceived of a toy theory based on the so-called knowledge-balance principle: 
	``If one has maximal knowledge, then for every system, at every time, the amount of knowledge one possesses about the ontic state of the system at that time must equal the amount of knowledge one lacks'' \cite{Spekkens-2007-knowledge-balance} (cf.\ similar in-principle restrictions on the detectable amount of information by Zeilinger et al.\ \cite{Zeilinger-information-principle}). 
	The idea was to construct a theory in which (at least some) quantum states can be viewed as epistemic as opposed to ontic.
	That is, they would represent states of incomplete knowledge about a physical system instead of different states of physical reality. 
	The theory admits a deterministic non-contextual ontological model, whose kinematics is given by phase spaces of classical particles and its dynamics preserves the phase space structure. 
	Epistemic states of the toy theory arise from the ontic states via an epistemic restriction called classical complementarity: 
	\textit{Two linear observables on the phase space can be jointly known only if their Poisson bracket vanishes}. 
	The toy theory qualitatively reproduces a large part of the operational predictions of quantum theory \cite[Table 2]{spekkens2016quasi}. 
	For instance, it can recover the complete behaviour of states and measurements in the stabiliser subtheory of quantum theory, whose states are eigenstates of products of Pauli operators. 
	With respect to the epistemic restriction of Spekkens' theory we ask the following question: 
	\textit{Can uncertainty in a physical theory arise without imposing an a priori restriction on the acquisition of knowledge?}

	We give an affirmative answer.
	Namely, inspired by \cite{hausmann2023toys}, we provide a deterministic physical theory\,---\,nomic toy theory\,---\,and show that agents are limited in the amount of information they can gather. 
	The limitation derives from the dynamics of nomic toy theory and a definition of information gathering agents modelled within the theory. 
	Furthermore, the epistemic horizon we \emph{derive} precisely matches the \emph{postulated} epistemic restriction of Spekkens' toy theory.
	This is interesting since Spekkens' toy theory includes no formal account of how agents acquire knowledge and what is the source of the limitation. 
	To our knowledge, our work constitutes the first account of an \emph{a posteriori} epistemic horizon arising from dynamical laws. 

	\paragraph{Paper Overview.}
	We proceed as follows. 
	First, in \cref{sec:nomic_toy_theory}, we define nomic toy theory, its ontic state space, the notion of a toy system, the characterisation of agents, as well as the dynamics and the notion of a measurement. 
	In \cref{sec:epi_bound} we present the main result, which represents a fundamental epistemic horizon in nomic toy theory.
	There, we also relate our work to Spekkens' toy theory, which is shown to arise as the epistemic counterpart of our nomic toy theory (Section \ref{sec:epi_counterpart}). 
	We furthermore comment on the possibility of self-measurement in \cref{sec:self-measurement}. 
	The findings are summarised in \cref{sec:Conclusion}, where we also comment on the relationship to quantum and classical uncertainty more generally, and provide an outlook on related issues such as multi-agent scenarios and the participatory nature of the observer.  
	\cref{sec:toy_meas_examples} contains the details of a position and momentum measurement in nomic toy theory. 
	In \cref{sec:Spek_toy_theory} we provide supplementary material on Spekkens' toy theory, including several new proofs.
	For additional details on this toy theory, closely related to our nomic toy theory, we refer the reader to \cite{spekkens2016quasi,hausmann2021consolidating}. 

\section{Nomic Toy Theory}
\label{sec:nomic_toy_theory}

	To formulate our result on an epistemic horizon emerging from deterministic physical laws, we introduce nomic toy theory in which the subject-object relationship can be studied.\footnotemark{}
	\footnotetext{The use of the word nomic is motivated by the theory's emphasis on law-like interactions between an agent and another physical system.}
	The key feature of nomic toy theory is that it explicitly models the agent performing the measurement as a physical system in the theory. 
	Given the ontic state space (a classical phase space), deterministic dynamics (via symplectic maps), as well as a definition of the agent, the theory contains restrictions on what can be known about physical systems.

	We first introduce the state space and dynamics of toy systems (\cref{sec:toy_objects}) and elaborate on their properties in \cref{sec:variables}, to then define toy subjects within the theory (\cref{sec:toy_subjects}). 
	In \cref{sec:measurements} we define measurements between subjects and objects as a physical interaction.
	Finally, \cref{sec:measurable_variables} discusses what kind of information can be learned by a toy subject about a toy object via such interactions.
	An arbitrary learnable property is provided by the notion of a fixed variable (\cref{def:fix_var}).
	However, as we show in the crucial \cref{lem:fixed_from_measured}, the same information is carried by the smaller set of measurable variables (\cref{def:meas_var}).
	
	\subsection{Toy Systems}
	\label{sec:toy_objects}
	
		The formalism of physical states in nomic toy theory closely follows that of ontic states in Spekkens' toy theory (cf.\ \cref{sec:Spek_toy_systems}, \cite{hausmann2021consolidating} and \cite[Appendix A]{hausmann2023toys}).
		We begin with a description of the \textit{kinematics} of nomic toy theory and the definition of a physical system.
		\begin{definition}
			\label{def:toy_system}
			A physical system $V$ in nomic toy theory (a \textbf{toy system}) is specified by a symplectic vector space $\V$.
		\end{definition}
		We can also think of it as the phase space of a classical particle.
		Namely, $\V$ is a $2n$-dimensional $\mathcal{F}$-vector space\footnotemark{}
		\footnotetext{For a continuous toy system, $\mathcal{F}$ is $\mathbb{R}$, while for a discrete $d$-level system, it is $\mathbb{Z}_d$, in which case it is a field only if $d$ is a prime.
		For degrees of freedom with other finite cardinalities, one can instead consider $\V$ to be a $\mathbb{Z}_d$-module.} with an orthonormal basis $\{q_1,\ldots, q_n, p_1, \ldots,  p_n\}$. 
		It is furthermore equipped with a symplectic form $\omega \colon \V \times \V \to \mathcal{F}$ given by 
		\begin{equation}\label{eq:symplectic_matrix}
			\Omega = \begin{pmatrix}
					0 & \id_n \\
					- \id_n & 0
				\end{pmatrix}
		\end{equation}
		in matrix form in the above basis, where $\id_n$ is the $n \times n$ identity matrix. 
		In particular, we have
		\begin{equation} \label{eq:omega}
			\omega(x, y) = x^T \Omega y = \langle x, \Omega y \rangle ,  
		\end{equation}
		where vectors are represented as columns, $x^T$ is the transpose of $x$, and $\langle \: , \: \rangle$ is the canonical inner product. 
	
		A physical state of the toy system (an \textbf{ontic state}) is then specified by an element of $\V$.
	
		The choice of \textit{dynamics} of the theory is inspired by the Hamiltonian formulation of classical mechanics. 
		In particular, its time evolution via Hamiltonian flow is always a symplectomorphism{\,---\,}a map between symplectic manifolds that preserves the symplectic structure. 
		For the manifolds considered here, i.e.\ symplectic affine spaces, 
		there are two basic types of such transformations.
		One can be represented by a linear map $\V \to \W$ which preserves the symplectic form. 
		The other corresponds to an affine map $\V \to \V$ that translates each state by a chosen vector in $\V$.
		These are exactly the allowed transformations of ontic states in Spekkens' toy theory.
		The choice of dynamics of nomic toy theory is thus compatible with the epistemic restriction of Spekkens' toy theory (see \cref{lem:sympl_isotr,prop:maps_preserve_supp} for a proof). 
		
		For a symplectic vector space $\V$, the symplectic maps $\V \to \V$ form the symplectic group, whose matrix representation is
		\begin{equation}\label{eq:sympl_group}
			\mathrm{Sp}(\V) \coloneqq \Set*[\big]{ M \in \mathrm{GL}(\V)  \given  M^T \Omega M = \Omega },
		\end{equation}
		where $\mathrm{GL}(\V)$ is the set of the invertible linear maps of type $\V \to \V$.
				
		We thus define the group of \textbf{reversible transformations} of a given toy system in nomic toy theory to be the affine symplectic group:
		Its elements are pairs $(t,v)$ of a symplectic map $t \in \mathrm{Sp}(\V)$ and a vector $v \in \V$, which compose via 
		\begin{equation}
			(s,u) \circ (t,v) = \bigl(s \circ t, u + s(v) \bigr).
		\end{equation}
		As we can see, the dynamical evolution of ontic states in nomic toy theory is deterministic.
		That is, a given reversible transformation $(t,v)$ acts uniquely on ontic states via ${x \mapsto t(x) + v}$.

	\subsection{Properties of Toy Systems}
	\label{sec:variables}
		
		To facilitate our formal derivation of the epistemic horizon in nomic toy theory, we discuss several properties of toy systems in this section.
		Our main theorem (\cref{thm:meas_var_is_obs}) later establishes which of these properties can be acquired by a toy subject through a measurement interaction (see \cref{sec:measurements}).
		In particular, there are properties that cannot be learned in this way and thus lie beyond the epistemic horizon.
			
		Our notion of a \textit{variable} is intended to model an arbitrary property of a toy system (at a particular point in time\footnote{Note that the notion of time is implicit but of no particular relevance for the results. It only matters that a transformation connects a pre-measurement state to a post-measurement state.}). 
		On the other hand, a \textit{Poisson variable} is a special property which, as we prove later in \cref{thm:meas_var_is_obs}, is measurable by toy subjects.
		 \cref{tab:properties} provides an overview of the different kinds of properties of toy systems.
		\begin{definition}\label{def:variable}
			Let $V$ be a toy system with symplectic vector space $\V$. A function $Z \colon \V \to \Z$ is termed a \textbf{variable} of $V$, where $\Z$ is the set of values of the variable.
			A variable is termed \textbf{Poisson} if $\Z$ is an $\mathcal{F}$-vector space and $Z$ is a linear map that satisfies 
			\begin{equation}\label{eq:compatible_functionals}
				Z \Omega Z^T=0,
			\end{equation}
			where $\Omega$ is the matrix representation of the symplectic form. 
		\end{definition}

		Every variable induces a partition
		\begin{equation}\label{eq:partition}
			\Set{ Z^{-1}(x)  \given  x \in \Z }
		\end{equation}
		of the set $\V$ of ontic states.
		Variables that induce the same partition are considered to be equivalent.
		Note that Poisson variables are valued in a vector space, whose dimension tells us about the potential number of independent scalar properties it can describe.
		An important special case is when the dimension is $1$, in which case we speak of a \textbf{functional} $\V \to \mathcal{F}$.
		Such a linear map automatically satisfies \cref{eq:compatible_functionals}.

		For any basis $\{z_i\}_{i = 1}^{\mathrm{dim}(\Z)}$ of a vector space $\Z$, we can think of an arbitrary linear map $Z \colon \V \to \Z$ as a set of functionals $\{Z_i\}$, where $Z_i$ is given by $z_i^T Z$ in matrix form.	
		In this representation, \cref{eq:compatible_functionals} says that every pair of these functionals must have vanishing Poisson bracket, i.e.\ 
		\begin{equation}
			\omega(Z_i^T ,Z_j^T) = Z_i \Omega Z_j^T=0
		\end{equation}	holds for all $i$ and all $j$.
		Therefore, Poisson variables precisely correspond to properties which, in Spekkens' toy theory, are assumed to be knowable about the toy system. 

		While this epistemic horizon is traditionally postulated in Spekkens' toy theory, we derive it in nomic toy theory.
		
		\begin{remark}
			To see the connection to epistemic states of Spekkens' toy theory (\cref{sec:Spek_toy_theory}), note that the set of vectors $\{Z_i^T\}$ spans an isotropic subspace of $\V$.
			Together with a value of $Z$, it thus specifies an epistemic state.
			The support of this epistemic state is an element of the partition from \eqref{eq:partition}.
		\end{remark}
			
		\begin{table}[tb!]\centering
			\begin{tabular}{c|c|c|c} 
				\thead{Property}
				& \thead{Type}
				& \thead{Values}
				& \thead{Extra conditions}
			\\ \hline 
				\makecell{Variable $Z$}
				& \makecell{Function $\V \to \Z$}
				& \makecell{Set $\Z$} 
				& \makecell{-- --}
			\\ 
				\makecell{Poisson variable $Z$}
				& \makecell{Linear map $\V \to \Z$}
				& \makecell{Vector space $\Z$} 
				& \makecell{$Z \Omega Z^T=0$}
			\\ 
				\makecell{Functional $Z$}
				& \makecell{Linear map $\V \to \mathcal{F}$}
				& \makecell{Scalar $\mathcal{F}$} 
				& \makecell{-- --}
			\\ 
			\end{tabular}
			\caption{Summary of the three different types of properties of a toy system $V$.
			Note that every functional is a Poisson variable.}
			\label{tab:properties}
		\end{table}
		
		A canonical example of a Poisson variable is the projection of $\V$ onto the $n$-dimensional subspace spanned by the $q_i$ basis vectors.
		It satisfies \cref{eq:compatible_functionals} because we have ${\omega(q_i, q_j) = 0}$ for all $i$ and all $j$.
		In other words, the symplectic form vanishes on this subspace.
		The highest dimension of a subspace with this property is $n$.
		The following standard concept generalises such a \emph{maximal} Poisson variable.
		\begin{definition}
			\label{def:Lagrangian}
			An $n$-dimensional subspace $\Q$ of a symplectic vector space $\V$, on which the symplectic form $\omega$ vanishes, is called a \textbf{Lagrangian} subspace.
		\end{definition}
		For any Lagrangian subspace, the associated projection $Q \colon \V \to \Q$ is a Poisson variable.
		Moreover, by Darboux's Theorem, there is a basis of its orthogonal complement $\P \coloneqq \Q^\perp$, in which the symplectic form has the canonical form of \cref{eq:symplectic_matrix} with respect to the decomposition $\V = \Q \oplus \P$.
	
	\subsection{Toy Subjects}
	\label{sec:toy_subjects}
	
		Physical theorising is often done from an omniscient point of view \textit{external} to the world. 
		That is, one introduces a theoretical domain of discourse\,---\,the physical world together with some law-like behaviour\,---\,to explain the phenomena that are directly observable through empirical data. 
		For instance, according to an omniscient being like Laplace's demon the future and past of the world is completely fixed if the laws are deterministic. 
		
		However, observations of phenomena necessarily occur \textit{within} the world. 
		Therefore, every physical theory requires in addition an epistemology that stipulates what can be known, e.g.\ about the physical world. 
		That is, intuitively, we need to specify what the empirical data can and cannot signify about the physical world. 
		
		And so it may happen that the two perspectives disagree. 
		Even if the omniscient viewpoint contains no fundamental uncertainty about all details of the world, an internal agent could be bound to epistemic limitations. 
		Whether the omniscient view is or is not conceivable, it may be unreachable for any agent as a result of the dynamical constraints of the world in which the agent operates.
		To study this tension, let us introduce the notion of agents in nomic toy theory.
		Note that we do not place any anthropocentric constraints on these, our agents are part of nature in the same way that their objects of study are. 
		Since our agents are decidedly minimal and may not fulfil elaborate requirements for agency \cite{van2023introducing,mcgregor2024formalising}, we also call them toy subjects.

		We only have two basic desiderata. First, our toy subjects are meant to be physical systems. They are toy systems of the same kind as any object to be observed and interacted with.\footnote{See also Hausmann et al.'s more operational approach to modelling the memory register of an agent as a toy bit \cite{hausmann2023toys}.}
		That is, an information gathering subject is an arbitrary toy system as introduced in \cref{sec:toy_objects}. 
		
		Secondly, a toy subject ought to include a specification of its `knowledge' variables. 
		These are manifest properties of the subject that represent the directly accessible empirical data on which the subject's knowledge supervenes. 
		The dynamics of nomic toy theory, in turn, dictates what the manifest variables of the subject can and cannot signify about the ontic properties of an object with which the subject interacts.

		\begin{definition}
			\label{def:toy_subject}
			A \textbf{toy subject} is a toy system $A$ equipped with a Lagrangian subspace $\Q$ of the symplectic vector space $\A$.
			The associated Poisson variable $Q \colon \A \to \Q$ is called the \textbf{manifest} variable of the subject.
		\end{definition}
		An example of a toy subject is a simple pointer apparatus. 
		The manifest variable would be the value on a scale or the angle of a pointer needle. 
		Inspired by such example, we call the manifest variable $Q$ of $A$ the \textbf{position} of $A$ and its complementary variable $P$ the \textbf{momentum} of $A$.

		Even though we do not, in general, assume what type of degrees of freedom the manifest ones are, we label them as `positions' for the sake of simplicity.
		
		The crucial property of the way we conceptualise a toy subject is that its knowledge supervenes on its manifest variable.
		Importantly, it is a variable associated to a \emph{Lagrangian} subspace of its own ontic state space.
		Therefore, the toy subject does not have direct access to the value of its own momentum variable $P$. 
		This has implications for the feasibility of measurements that the toy subject can implement.
		In particular, given a specific value $q$ of the position variable $Q$, the toy subject \emph{can} perform a measurement of another toy system conditionally, i.e.\ so that its own position prior to the measurement has value $q$. 
		On the other hand, we \emph{cannot} grant it the power to fix its own momentum value before the measurement interaction since there is no a priori way for the toy subject to know its own momentum.
		We discuss measurements in more detail in \cref{sec:measurements}.

		One may be tempted to view the restriction on a toy subject's access to its own ontic state as a kind of epistemic horizon (on self-knowledge rather than on knowledge of the world).
		However, this is not fully justified.
		Even if an agent has no direct access to some of its own degrees of freedom, it could still learn about them indirectly.
		Whether this is possible or not depends on the dynamical laws of the world in which the agent operates.
		We discuss toy subjects measuring their own momentum in the context of nomic toy theory in \cref{sec:self-measurement}.
		
		Nevertheless, that the knowledge of a toy subject supervenes on a Poisson variable (\cref{def:variable}) rather than its ontic state is a key ingredient in our derivation of the epistemic horizon in \cref{sec:main_theorem}.
		Other agents, such as ones with direct access to their own ontic state, would be able to break the epistemic horizon of nomic toy theory.

	\subsection{Measurement Interactions}
	\label{sec:measurements}
	
		Let us now turn to the discussion of how a toy subject $A$ may learn about a toy system $S$ by virtue of interacting with it.
		To distinguish $S$ from $A$, we call such $S$ the \textbf{toy object}.
		
		We model the acquisition of knowledge as a process in nomic toy theory (\cref{fig:subject-object}), which transforms the joint system of $S$ and $A$ denoted by $S \oplus A$.
		The joint ontic state space is given by the direct sum $\S \oplus \A$, which carries a canonical symplectic structure induced by those of $\S$ and $\A$.
		For more details on joint systems as well as joint and marginal states in Spekkens' toy theory, see \cref{sec:nomic_multiple_systems}.
		
		We also assume that the toy subject $A$ is in a `ready state' prior to the process, i.e.\ its position variable $Q$ has a definite value. 
		Since the value of $Q$ is already assumed to be directly accessible to $A$ (see \cref{sec:toy_subjects}), this presents no additional assumption.
		
		\begin{figure}[t]
			\centering
			\includegraphics[width=0.9\linewidth]{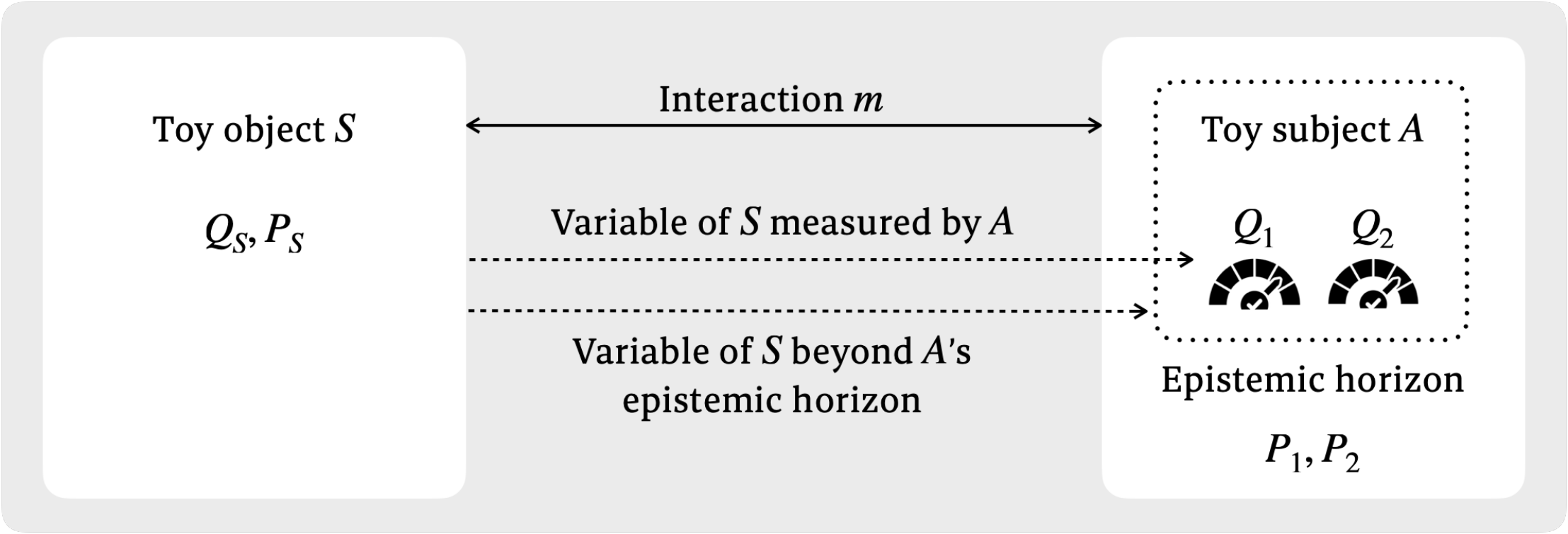}
			\caption{A toy subject $A$ with two manifest variables $Q_1$ and $Q_2$ 
			gathers information about a toy object $S$ via a measurement interaction $m$. 
			Due to the physical nature of the measurement process, the subject can only acquire information about compatible variables. 
			That is, its \textit{internal} perspective (indicated in the white box) has an epistemic horizon, by which the toy subject can learn some properties but not others. 
			In contrast, the \textit{external} omniscient perspective (gray box) features the \textit{complete} joint description of both the toy object and subject. 
			See also \cref{fig:polarised-light}.}
			\label{fig:subject-object}
		\end{figure}

		\begin{definition}\label{def:measurement}
			Given a toy system $S$ and a toy subject $A$ (with manifest variable $Q$), a \textbf{measurement} of $S$ by $A$ is a pair of a ready state, specified by a value of $Q$, and a reversible transformation $m \colon \S \oplus \A \to \S \oplus \A$ of nomic toy theory. 
		\end{definition}
		That is, $m$ is given by an affine symplectic map
		\begin{equation}
			x \mapsto M x + v
		\end{equation}
		where $x, v \in \S \oplus \A$ and $M$ is a symplectic matrix.

		The assumption that the measurement process is governed by reversible transformations does not pose any loss of generality if we assume that all irreversible transformations can be dilated to a reversible one with larger output (cf.\ \cref{sec:phys_transf}).
		Any information obtained by the irreversible process could then also be learned via its reversible dilation.
		
	\subsection{Measurable Properties of Toy Systems}
	\label{sec:measurable_variables}
	
		Regarding measurement interactions in nomic toy theory, we are concerned with the following question:
		Which variables $Z \colon \S \to \Z$ can be measured by the toy subject $A$ via a measurement as in \cref{def:measurement}?
		Our model of learning presumes that the toy subject $A$ only has direct access to its own manifest variable $Q$.
		That is, there should be a way to extract the value of $Z$ prior to the measurement from the value of $Q$ after the measurement.
		The following definition formalises this notion.
		\begin{definition}\label{def:fix_var}
			Given a measurement $m$ of a toy object $S$ by a toy subject $A$ and a variable $Z$ of $S$, we say that $Z$ is \textbf{fixed} by $m$ if there exists a function $f \colon \Q \to \Z$ satisfying
			\begin{equation}\label{eq:meas_var}
				Z(s) = f \circ Q \circ m (s + p)
			\end{equation}
			for all $s\in \S$ and all $p \in \P$. 
		\end{definition}
		Here, $Z(s)$ is the value of the $Z$ variable before the measurement took place, while ${Q \circ m (s + p)}$ is the subject's position after the measurement. Note that $\circ$ denotes, as usual, the composition of functions. 
	
		The fact that \cref{eq:meas_var} is required to hold for every $p$ expresses the assumption that the subject cannot use any direct information about its own initial momentum to learn about $Z$.

		Note that the initial value of $Q$, which has a definite value because the subject enters the interaction in a ready state, is hidden in the choice of $f$.
		Specifically, let $q_0$ be the initial poistion of the toy subject $A$.
		Given a function $f'$ satisfying
		\begin{equation}
			Z(s) = f' \circ Q \circ m (s + q_0 + p)
		\end{equation}
		for all $s$ and all $p$, one can define a new function
		\begin{equation}
			f (q) \coloneqq f' \bigl( q + Q \circ m(q_0) \bigr),
		\end{equation}
		which, by linearity of $m$ and $Q$, satisfies \cref{eq:meas_var}.
		Thus, there is no loss of generality in setting $q_0 = 0$ in \cref{def:fix_var}.
		However, the fact that the suitable $f$ depends non-trivially on $q_0$ means that assuming the toy subject to enter the interaction in a ready state is necessary.
		
		Let us decompose the measurement interaction's matrix form into blocks with respect to $\V = \S \oplus \Q \oplus  \P $ via
		\begin{equation}\label{eq:block_measurement}
			M = \begin{pmatrix}
				M_{\S \S} & M_{\S \Q} & M_{\S  \P } \\
				M_{\Q \S} & M_{\Q \Q} & M_{\Q  \P } \\
				M_{\P  \S} & M_{\P  \Q} & M_{\P   \P }
			\end{pmatrix},
		\end{equation}
		where, for example, $M_{\Q \P}$ is the block that acts as the linear map $\P \to \Q$.
		\begin{definition}\label{def:fixed_outcomes}
			Given the notation from \cref{eq:block_measurement}, the subspace $\im \left( M_{\Q\P} \right)$ of $\Q$ provides the \textbf{contingent} manifest variable given by the orthogonal projection onto this subspace and denoted by $C \colon \Q \to \C$.
			Its orthogonal complement in $\Q$ specifies the \textbf{free} manifest variable denoted by $F \colon \Q \to \F$.
		\end{definition}
Here, $\im \left( M_{\Q\P} \right)$ denotes the image of the map $M_{\Q\P}$. 
		The value of the contingent manifest variable after the transformation $m$ depends on the initial momentum of $A$, which motivates its name.
		On the other hand, the value of the free manifest variable after the measurement $m$ is independent of the initial momentum of $A$.
		
		Thus, by definition we have $\Q = \F \oplus \C$. 
		Furthermore, if we write the symplectic matrix $M$ of the transformation in a block form with respect to the decomposition 
		\begin{equation}\label{eq:fixed_random_decomposition}
			\S \oplus \A = \S \oplus \F \oplus \C \oplus \P ,
		\end{equation}
		then the block $M_{\F \P}$ vanishes by definition.
	 
		Among all the variables fixed by a given measurement, there is an essentially unique most discerning (i.e.\ most informative) one, as we show in \cref{lem:fixed_from_measured} below.
		It is the variable $s \mapsto F \circ m (s)$, which is a linear map $\S \to \F$ that is given by $M_{\F \S}$ in matrix form.
		\begin{definition}\label{def:meas_var}
			The variable \textbf{measured} by a measurement $m$ is the linear map $M_{\F \S} \colon \S \to \F$, where $F$ is the free manifest variable. 
			A variable $Z \colon \S \to \Z$ is called \textbf{measurable} if it is measured by some transformation in nomic toy theory.
		\end{definition}
		The next proposition shows that any variable fixed by a measurement can be extracted from the variable measured by it.
		Therefore, considering variables that are fixed by some measurement does not give the agent any more information about the system than merely restricting attention to variables of the form $M_{\F \S}$.
		This result justifies our identification of the set of \emph{measurable} variables as representing all properties of a toy object that a toy subject can acquire through a measurement interaction.
		\begin{proposition}\label{lem:fixed_from_measured}
			If $Z$ is a variable fixed by a measurement $m$, then there is a function $f \colon \F \to \Z$ such that for each $s \in \S$ we have
			\begin{equation}\label{eq:fixed_from_measured_1}
				Z(s) = f \left( M_{\F \S} \, s \right).
			\end{equation}
		\end{proposition}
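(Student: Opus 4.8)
The plan is to exploit the definition of the contingent subspace $\C = \im\left( M_{\Q\P} \right)$ in order to show that the subject's own momentum $p$ can sweep out the entire $\C$-direction of the post-measurement position, so that any witness $f$ for $Z$ being fixed cannot in fact depend on the $\C$-component of its argument. The statement then follows by factoring $f$ through the projection onto $\F$.

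First I would set $q_0 = 0$ (justified in the text) and compute $Q \circ m(s+p)$ for $s \in \S$ and $p \in \P$ in block form with respect to the splitting $\Q = \F \oplus \C$. Writing $v_\F$ and $v_\C$ for the $\F$- and $\C$-components of the translation vector $v$, the $\F$-component of $Q \circ m(s+p)$ is
\[
	M_{\F\S}\, s + M_{\F\P}\, p + v_\F = M_{\F\S}\, s + v_\F ,
\]
where the term $M_{\F\P}\, p$ drops out because $M_{\F\P} = 0$ by the definition of $\F$ (\cref{def:fixed_outcomes}). The $\C$-component is $M_{\C\S}\, s + M_{\C\P}\, p + v_\C$. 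Thus the $\F$-coordinate of the subject's final position is independent of $p$, whereas its $\C$-coordinate varies with $p$.

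The key step is to observe that, for fixed $s$, the $\C$-coordinate ranges over \emph{all} of $\C$ as $p$ ranges over $\P$. Indeed, $M_{\Q\P}$ maps $\P$ onto $\C = \im\left( M_{\Q\P} \right)$ and its image lies in $\C$, so $\im\left( M_{\C\P} \right) = \C$; hence the affine set $\{M_{\C\S}\, s + M_{\C\P}\, p + v_\C : p \in \P\}$ equals all of $\C$. Consequently, as $p$ varies, the point $Q \circ m(s+p)$ sweeps out the entire fibre $\{(M_{\F\S}\, s + v_\F,\, w) : w \in \C\}$ of fixed $\F$-coordinate.

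Finally, since $Z$ is fixed by $m$, there is a function $f \colon \Q \to \Z$ with $Z(s) = f\left( Q \circ m(s+p) \right)$ for all $p$; as the left-hand side does not depend on $p$, the function $f$ must be constant (and equal to $Z(s)$) on the whole fibre above $M_{\F\S}\, s + v_\F$. Evaluating at the representative $w = 0$ and defining $\tilde f \colon \F \to \Z$ by $\tilde f(u) \coloneqq f(u + v_\F,\, 0)$ then gives $Z(s) = \tilde f\left( M_{\F\S}\, s \right)$, which is the claim. I expect the only genuine subtlety to be the surjectivity argument of the third step\,---\,recognising that the definition $\C = \im\left( M_{\Q\P} \right)$ is precisely what forces the contingent coordinate to be uninformative\,---\,while the block computation and the final factorisation are routine.
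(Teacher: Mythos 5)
Your proposal is correct and follows essentially the same route as the paper's proof: both arguments rest on $M_{\F\P}=0$ making the $\F$-coordinate independent of $p$, and on the surjectivity of $M_{\C\P}$ onto $\C$ making the $\C$-coordinate uninformative (the paper picks a specific $p_s$ cancelling $M_{\C\S}\,s$, which is just a pointwise version of your ``sweeps out the whole fibre'' observation). The only cosmetic difference is that you carry the affine shift $v$ explicitly, whereas the paper discards it at the outset without loss of generality.
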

		\begin{proof}
			Without loss of generality, we can assume that $m$ is a linear map, so that $m(x) = Mx$ for any vector $x \in \S \oplus \A$.
			This is because affine shifts do not affect whether a variable is fixed by a measurement.

			The fact that $Z$ is fixed by $m$ means that there is a function $f \colon \F \to \Z$ satisfying \cref{eq:meas_var}.
			Using the notation from \cref{eq:block_measurement} and that $M_{\F \P}$ vanishes by the definition of the free manifest variable $F$, we thus have 

			\begin{equation}\label{eq:fixed_from_measured_2}
				Z(s) = f \bigl( M_{\F\S} \, s + M_{\C\S} \, s + M_{\C\P} \, p \bigr)
			\end{equation}
			for all $s \in \S$ and all $p \in \P$.
			
			Since $M_{\C\P}$ is surjective by the definition of the contingent manifest variable $C$, there is a $p_s \in \P$ that satisfies
			\begin{equation}
				 M_{\C\P} \, p_s = - M_{\C\S} \, s 
			\end{equation}
			for a given value of $s$.
			Choosing $p$ to be $p_s$ in \cref{eq:fixed_from_measured_2} thus completes the proof.
		\end{proof}

\section{Epistemic Horizons from Deterministic Laws}
\label{sec:epi_bound}

	We are now ready to present our main result (\cref{thm:meas_var_is_obs}), which derives a limitation on the toy subject's abilities to learn about toy objects. 
	Specifically, we show that a variable is measurable (\cref{def:meas_var}) if and only if it is a Poisson variable (\cref{def:variable}) in nomic toy theory (\cref{sec:main_theorem}). 
	In \cref{sec:self-measurement}, we comment on why our agents know nothing about the object prior to learning and how this assumption can be justified with measurement disturbance.
	We also discuss a model of a toy subject measuring its own momentum and show that it does not break the epistemic horizon\,---\,unlike an agent that would have direct access to its own ontic state.
	Since Poisson variables in nomic toy theory are exactly those that can be known in Spekkens' toy theory, we conclude in \cref{sec:epi_counterpart} that Spekkens' toy theory is the epistemic counterpart of nomic toy theory.

	\subsection{Constraints on Information Acquisition}
	\label{sec:main_theorem}
	
	Recall that Poisson variables can be thought of as a collection of functionals with mutually vanishing Poisson brackets.
	Since the Poisson bracket of generic functionals does not vanish, this implies that not all properties of a toy system can be known simultaneously by an agent in the theory.
	
	With all the definitions introduced in \cref{sec:nomic_toy_theory}, we can now state our main theorem. 
	\begin{theorem}\label{thm:meas_var_is_obs}
		A variable is measurable in nomic toy theory if and only if it is a Poisson variable. 
	\end{theorem}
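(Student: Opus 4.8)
The statement is a biconditional, so the plan is to prove the two implications separately: first that every measured variable $M_{\F\S}$ is a Poisson variable (necessity), and then that every Poisson variable is realised as the variable measured by some measurement (sufficiency). Throughout I write $\Omega_\S$ for the symplectic form of $\S$, and I use the decomposition $\S \oplus \F \oplus \C \oplus \P$ of \cref{eq:fixed_random_decomposition}.

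For necessity I would start from the fact that the measurement matrix $M$ is symplectic, so that besides $M^T \Omega M = \Omega$ one also has $M \Omega M^T = \Omega$ (the symplectic group is closed under transposition). I would then read off the block of this identity in the rows and columns indexed by $\F$. Three structural facts make almost every contribution collapse: the space $\S$ is symplectically orthogonal to $\A$ since the joint form is a direct sum; the subspace $\F \subseteq \Q$ is isotropic because $\Q$ is Lagrangian, so $\Omega_{\F\F} = 0$; and the block $M_{\F\P}$ vanishes by the definition of the free manifest variable (\cref{def:fixed_outcomes}). Discarding all terms that carry a factor of $M_{\F\P}$, together with the vanishing cross terms between $\S$ and $\A$ and the vanishing blocks $\Omega_{\F\F} = \Omega_{\F\C} = \Omega_{\C\C}=0$, the $(\F,\F)$-block of $M\Omega M^T$ reduces to the single surviving term $M_{\F\S}\,\Omega_\S\,M_{\F\S}^T$. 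Since the corresponding block of $\Omega$ is $\Omega_{\F\F} = 0$, this yields exactly $M_{\F\S}\,\Omega_\S\,M_{\F\S}^T = 0$, i.e.\ $M_{\F\S}$ is Poisson in the sense of \cref{def:variable}.

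For sufficiency, given a Poisson variable $Z \colon \S \to \Z$, I would exhibit a toy subject and a measurement whose measured variable is $Z$. I choose the subject $A$ so that its position space is $\Q = \Z$, with a conjugate momentum space $\P$, and define $m$ by the matrix
\[
M = \begin{pmatrix} I & 0 & \Omega_\S Z^T \\ Z & I & 0 \\ 0 & 0 & I \end{pmatrix}
\]
in block form with respect to $\S \oplus \Q \oplus \P$. This is the von Neumann coupling copying the value of $Z$ into the pointer position: it sends $s \mapsto s + \Omega_\S Z^T p$ and $q \mapsto q + Zs$, leaving $p$ fixed. The hypothesis enters only in checking $M^T \Omega M = \Omega$, where the $(\P,\P)$-block evaluates to $Z \Omega_\S Z^T$, which vanishes precisely because $Z$ is Poisson; I would verify this identity directly, using $\Omega_\S^T = -\Omega_\S$ and $\Omega_\S^2 = -I$, rather than via a Hamiltonian flow, so that the argument holds over any $\mathcal{F}$ and no division by two is required. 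For this $M$ one has $M_{\Q\P} = Z\Omega_\S Z^T = 0$, hence $\C = \im(M_{\Q\P}) = 0$ and $\F = \Q = \Z$, so the measured variable is $M_{\F\S} = Z$, as required.

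The conceptual crux, and the step I expect to be most delicate to present cleanly, is the necessity direction: the entire content is compressed into the bookkeeping of the $(\F,\F)$-block, and one must be careful that isotropy of $\F$, symplectic orthogonality of $\S$ and $\A$, and the vanishing of $M_{\F\P}$ together eliminate every term except $M_{\F\S}\Omega_\S M_{\F\S}^T$. The sufficiency direction is mainly a matter of producing the map, but it is worth stressing that the Poisson condition is exactly what forces $\C = 0$; without it the coupling would leak information about $Z$ into the contingent variable $C$, and only the projection of $Z$ onto $\F$ would survive, in agreement with (and dual to) the computation in the necessity direction.
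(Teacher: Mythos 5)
Your proposal is correct and follows essentially the same route as the paper: the sufficiency direction uses the identical von Neumann--type coupling matrix with $M_{\S\P} = \Omega_\S Z^T$ and $M_{\Q\S} = Z$, with the Poisson condition entering only through the $(\P,\P)$-block of $M^T\Omega M$, and the necessity direction extracts the $(\F,\F)$-block of $M\Omega M^T = \Omega$ and kills all terms but $M_{\F\S}\Omega_\S M_{\F\S}^T$ using $M_{\F\P}=0$. One harmless slip: in your constructed matrix the block $M_{\Q\P}$ is literally $0$ by construction, not $Z\Omega_\S Z^T$ (though the latter also vanishes), so $\F = \Q$ holds unconditionally for that $M$.
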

	Moreover, by \cref{lem:fixed_from_measured}, the only variables fixed by some measurement in nomic toy theory are those that can be written as a function of some Poisson variable. 
	We illustrate this phenomenon in \cref{fig:polarised-light}.

\begin{figure}[t]
			\centering
			\includegraphics[width=0.85\linewidth]{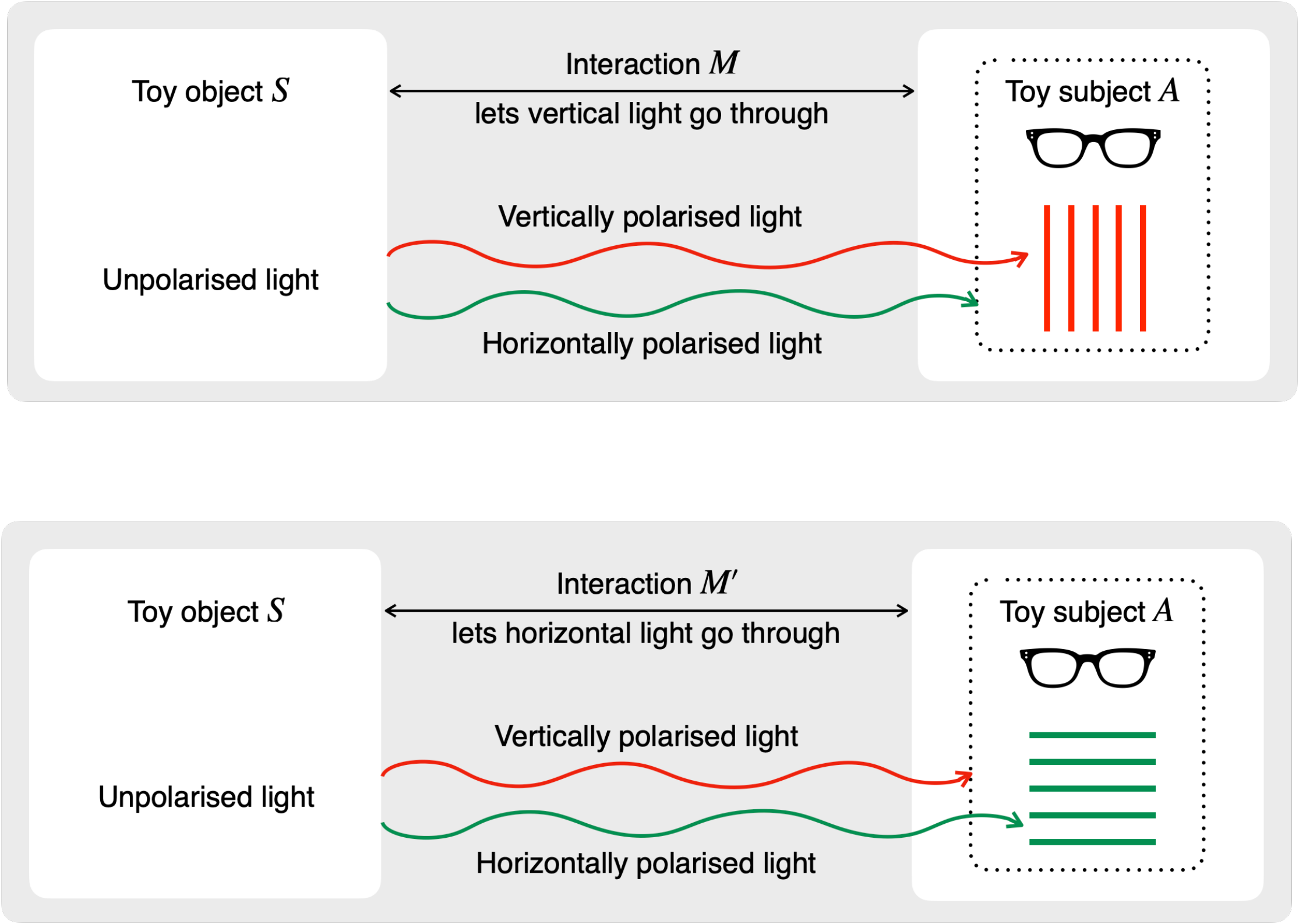}
			\caption{The epistemic horizon `experienced' by the toy subject is akin to the situation in which the subject would wear glasses that only let vertically polarised light or horizontally polarised light go through. 
			The positioning of the glasses determines whether interaction $m$ or $m'$ takes place. 
			The properties of the toy object are analogous to unpolarised light, which can be decomposed into vertical and horizontal components. 
			The toy subject can choose the orientation of glasses, but cannot observe the toy object without the glasses.}
			\label{fig:polarised-light}
		\end{figure}

	\begin{proof}
	We split the proof into two parts.
	\paragraph{Part I: Poisson variables are measurable.}
	In the first part, given any collection of \textit{compatible} components of a Poisson variable, we construct a transformation that implements their joint measurement.
	That is, we consider an arbitrary Poisson variable $Z \colon \S \to \Z$ of the toy system $S$ (see \cref{def:variable}). 
	Recall that $Z$ can be decomposed as a set of components (functionals) as $\{Z_i\}_{i = 1}^{\mathrm{dim}(\Z)}$.
	By definition, it satisfies the compatibility equation
	\begin{equation}\label{eq:comp_obs}
		Z \Omega_{\S} Z^T = 0,
	\end{equation}
	which can be interpreted as saying that its components have mutually vanishing Poisson brackets, i.e.\ they satisfy $\omega(Z_i^T, Z_j^T)=0$ for all $i$ and $j$.
	
	We then specify the phase space of the toy subject to be $\A = \Q \oplus \P$, where $\Q$ is defined to be $\Z${\,---\,}the vector space of possible values of $Z$.
	Here, we demand $\Q$ to be a Lagrangian subspace of $\A$, which thus uniquely fixes $\P$ and the symplectic structure on $\A$.
	The manifest variable of the toy subject $A$ is chosen to be the projection map $\A \to \Q$.
	
	We now construct a transformation $m \colon \S \oplus \A \to \S \oplus \A$ that measures the Poisson variable $Z$.
	It is the linear transformation given as a matrix by
	\begin{equation}\label{eq:meas_comp_obs}
		M =  \begin{pmatrix}
			\id & 0 & \Omega_{\S} Z^T \\
			Z & \id & 0 \\
			0 & 0 & \id
		\end{pmatrix}
	\end{equation}
	in the block form relative to the decomposition $\S \oplus \A = \S \oplus \Q \oplus \P$.
	A specific example of the above matrix for the case of a position measurement can be found below.
	
	Note that we have $M_{\Q\P} = 0$, which implies $\F = \Q$, and $M_{\Q\S} = Z$. 
	Thus, by \cref{def:meas_var}, the transformation $m$ measures $Z$ if it is indeed a valid transformation in nomic toy theory.
	To show that it is, we have to prove that it is a symplectic matrix, i.e.\ that ${M^T \Omega M = \Omega}$ holds.
	The left-hand side of this equation gives
	\begin{equation}
		\begin{pmatrix}
			\Omega_{\S} & 0 & 0 \\
			0 & 0 & \id  \\
			0 & - \id & Z \Omega_{\S} Z^T
		\end{pmatrix},
	\end{equation}
	which is indeed equal to $\Omega$, provided that $Z$ is a Poisson variable satisfying \cref{eq:comp_obs}.
	
	\paragraph{Part II: Measurable variables are Poisson variables.}
	In the second part of the proof, we show that no other variables can be measured by valid transformations in nomic toy theory.
	
	Our task is to show that if $Z$ is measurable, then it must be Poisson, which means proving 
	\begin{equation}
		Z \Omega_{\S} Z^T=0,
	\end{equation}
	since the fact that $Z$ is a linear map follows from the definition of measurable variables.
	
	Consider now a measurement $m$ where the toy subject $A$ is given by the symplectic vector space $\Q \oplus \P$ where $Q$ is its manifest variable.
	Moreover, the linear part of $m$ is denoted by $M$ with blocks denoted with respect to the decomposition from \eqref{eq:fixed_random_decomposition}.
	The fact that $m$ is a transformation in nomic toy theory means that $M$ is a symplectic matrix.
	Moreover, the transpose of every symplectic matrix is also symplectic, i.e.\ we have $M \Omega M^T = \Omega$.
	Extracting the $\F\F$ block out of this set of $16$ equations, we find
	\begin{equation}
		M_{\F\S} \Omega_{\S} M_{\F\S}^T - M_{\F\P} \left( M_{\F\F} + M_{\F\C} \right)^T + \left( M_{\F\F} + M_{\F\C} \right) M_{\F\P}^T = 0.
	\end{equation}
	Since $M_{\F\P}$ is the zero matrix by \cref{def:fixed_outcomes}, this implies 
	\begin{equation}\label{eq:sympl_7}
		M_{\F\S} \Omega_{\S} M_{\F\S}^T = 0,
	\end{equation}
	which is what we wanted to show, concluding the proof of \cref{thm:meas_var_is_obs}.
\end{proof}

	Note that every functional is a Poisson variable. 
	\Cref{thm:meas_var_is_obs} thus implies that every functional is measurable. 
	Furthermore, by the construction in the first part of the proof, a 2-dimensional subject suffices to measure it.
	
	\paragraph{An example of a position measurement.}
	Let us illustrate the construction of the measurement of a generic Poisson variable $Z$ with a concrete example.
	To this end, consider both the toy object $S$ and the toy subject $A$ to be 2-dimensional, i.e.\ each one comes with a single position and a single momentum degree of freedom. 
	Moreover, we choose $Z$ to be the position variable of $S$, which is a functional in this case. 
	In matrix form, $Z$ is given by 
	\begin{equation}\label{eq:pos_var}
		\begin{pmatrix}
			1 & 0
		\end{pmatrix}
	\end{equation}
	in the $\{q_\S, p_\S\}$ basis of $\S$.
	
	Before the measurement, the initial joint state of $S \oplus A$ is denoted by
	\begin{equation}
		v = \colvec{v_1, v_2, v_3, v_4}
	\end{equation}
	in the $\{q_\S, p_\S, q_\A, p_\A\}$ basis of $\S \oplus \A$.
	On the other hand, the measurement interaction from the proof of \cref{thm:meas_var_is_obs} is in general given by the matrix $M$ from \cref{eq:meas_comp_obs}.
	Substituting the position variable from \eqref{eq:pos_var} for $Z$ in this expression gives
	\begin{equation}\label{eq:pos_meas}
		M =  \begin{pmatrix}
			1 & 0 & 0 & 0 \\
			0 & 1 & 0 & -1 \\
			1 & 0 & 1 & 0 \\
			0 & 0 & 0 & 1 
		\end{pmatrix}.
	\end{equation}
	Thus, the post-measurement ontic state of the joint system is 
	\begin{equation}
		M v = \colvec{v_1, v_2 - v_4, v_1 + v_3, v_4}.
	\end{equation}
	We notice two crucial features. 
	First, if the toy subject $A$ is initially in a ready state, i.e.\ if $v_3$ has a definite value, then the manifest variable after the measurement encodes the initial position of $S$ given by $v_1$.
	This illustrates one role of our assumption that the agent's manifest variable be fixed prior to the measurement.
	
	Secondly, there is a back-reaction on the object's momentum\,---\,the conjugate variable to the measured position of $S$.
	In particular, its value after the measurement is disturbed by a value that equals the initial momentum of the toy subject $A$.
	This disturbance highlights the role of our assumption that the toy subject cannot directly know its own momentum.
	If it did, the measurement disturbance could be accounted for.
	
	Let us discuss both of these points in more detail now.

\subsection{A Couple of Caveats}
\label{sec:self-measurement}
	
	Our claim that the learning of an agent in nomic toy theory is limited by an epistemic horizon hinges on the following caveat. 
		
	\paragraph{The Relevance of (No) a Priori Knowledge.}
	We assume that, prior to any measurement, the agent possesses no knowledge about the state of the toy object $S$.
	Indeed, imagine that, on the contrary, the following is true:
	The agent $A$ is composed of two subsystems, i.e.\ we have $\A = \A_1 \oplus \A_2$ where each $A_i$ is a toy subject with an associated manifest variable $Q_i$.
	At time $t_1$ (labelling that the measurement process is yet to occur), the value of the manifest variable $Q_1$ encodes the momentum of $S$ and, importantly, the agent $A$ knows that this is the case.
	Furthermore, $A_2$ is in a ready state (see \cref{def:measurement}).
	
	Then, applying the transformation $m_q \colon \S \oplus \A_2 \to \S \oplus \A_2$ given by \cref{eq:meas_comp_obs}, where $Z$ is the position variable of $S$, enables $A_2$ to learn the position of $S$.
	Since the state of $A_1$ is unchanged by this transformation, at time $t_2$ (labelling that the transformation has occurred) we have the following situation:
	The manifest variable $Q_1$ at time $t_2$ encodes the momentum of $S$ at time $t_1$ and the manifest variable $Q_2$ at time $t_2$ encodes the position of $S$ at time $t_1$.
	In conjunction, at time $t_2$, the agent $A$ has a complete specification of the ontic state of $S$ at time $t_1$ and thus breaks the purported epistemic horizon.
	
	However, it is natural to assume that the agent has no knowledge of the toy object's state initially.
	After all, we are interested in deriving fundamental bounds on the learning capabilities of the agent. 
	Any pre-existing knowledge should be accounted for by an explicit process that allows $A$ to obtain information.
	Hence, we can circumvent the above caveat on grounds that the acquisition of information invariably involves an interaction with the physical world. 
	This justifies our assumption that toy subjects possess no a priori knowledge of the state of the toy object. 

	\paragraph{Measurement Disturbance.}
	But how can we be certain that the knowledge of the object's momentum by $A_1$ could not be accounted for by an explicit process in nomic toy theory?
	Abstractly, this follows from \cref{thm:meas_var_is_obs}.
	
	More concretely, consider a measurement $m_p \colon \S \oplus \A_1 \to \S \oplus \A_1$ that is applied before $m_q$ and given by \cref{eq:meas_comp_obs}, where $Z$ is now the momentum of $S$ (see \cref{fig:pos_mom_meas}).
	That is, $m_p$ encodes the momentum of $S$ at time $t_0$ (labelling that the measurement $m_p$ is yet to occur) into the manifest variable $Q_1$ at time $t_1$. 
	One can check that the momentum of $S$ is unaffected by $m_p$ and thus the manifest variable $Q_1$ at time $t_1$ also coincides with the momentum of $S$ at time $t_0$.
	See \cref{sec:toy_meas_examples} for the explicit computations.
	
		\begin{figure}[t]\centering
			\includegraphics[width=1\columnwidth]{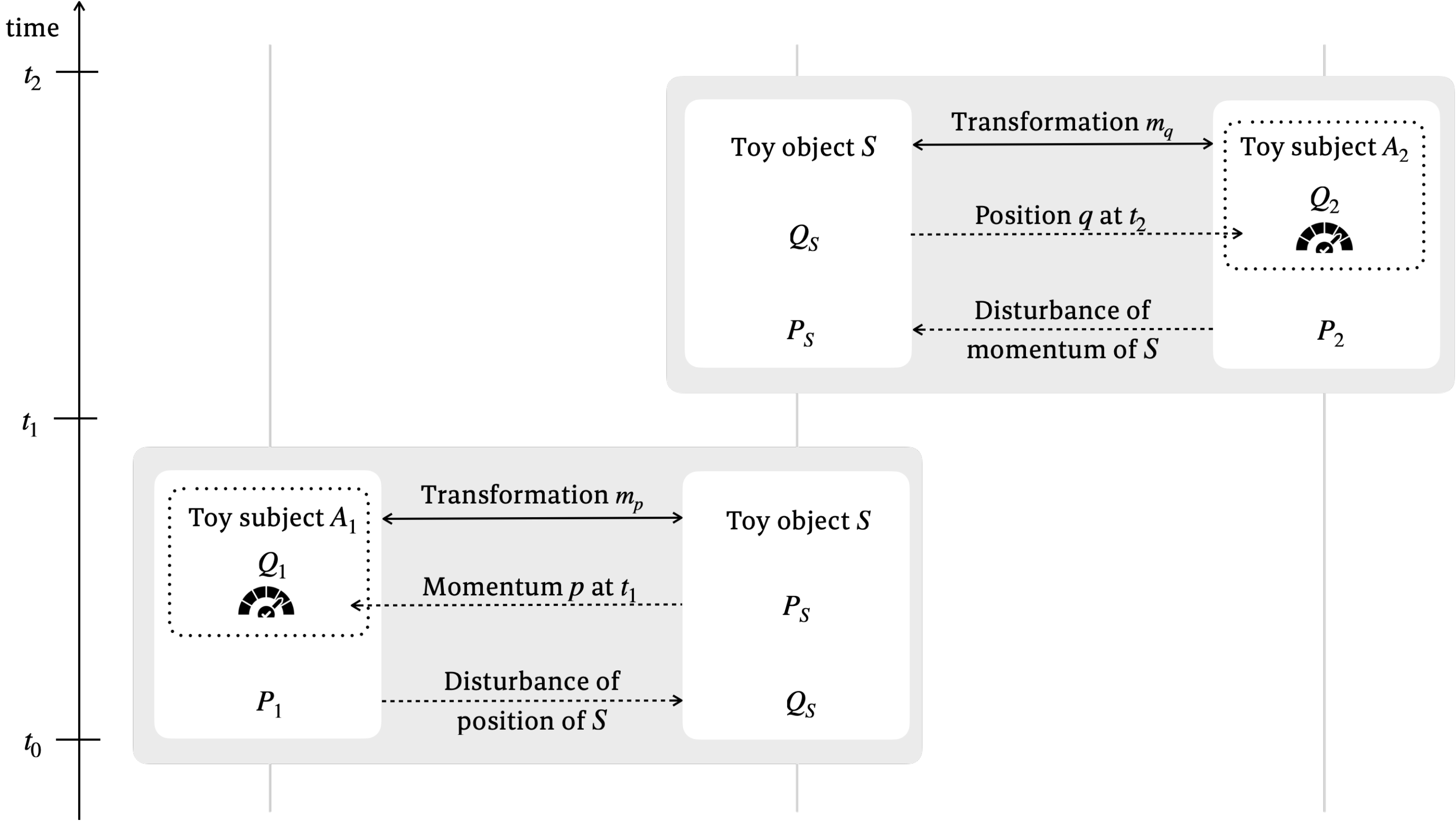}
			\caption{
				At time $t_0$, both subjects are in a ready state with $Q_1 = Q_2 = 0$.
				Then, the toy subject $A_1$ measures the toy object $S$, so that the initial value of the object's momentum gets encoded into the manifest variable $Q_1$.
				At the same time, the toy object's position gets disturbed by an amount that depends on the initial momentum $P_1$ of the toy subject $A_1$.
				Subsequently, the other toy subject $A_2$ interacts with $S$, so that the object's position at time $t_1$ gets encoded into the manifest variable $Q_2$.
				During this interaction, the object's momentum is disturbed by an amount that depends on the initial momentum of $A_2$.}
			\label{fig:pos_mom_meas}
		\end{figure}

	The issue is that $m_p$ disturbs the position of the toy object $S$ by a shift that depends on the (unknown) momentum of $A_1$ at time $t_0$.
	As a result, the subsequent transformation $m_q$ encodes the position of $S$ at time $t_1$, rather than the one at time $t_0$, into the manifest variable $Q_2$.
	The composed process $m_q \circ m_p$ is therefore no counterexample to \cref{thm:meas_var_is_obs}.
	Nevertheless, our discussion shows that the toy subject $A$ can, at time $t_2$, have perfect knowledge of the ontic state of the toy object $S$ at time $t_1$!
	This is in line with the fact that also the ontic state in Spekkens' toy theory can be perfectly known given both pre- and post-selection~\cite{hausmann2023toys}. It is worth mentioning that the same is true in quantum mechanics.
	
	Nevertheless, neither the ontic state of $S$ at time $t_0$ nor the one at time $t_2$ is completely known to $A$. That is, it is still true that the subject $A$ cannot at any time encode a previously unknown ontic state of the object $S$ that it possesses at that very same time.
	The transformation $m_p$ applied between times $t_0$ and $t_1$ disturbs the position of $S$, while the transformation $m_q$ applied between times $t_1$ and $t_2$ disturbs the momentum of $S$.
	The results we derive in this paper rule out the possibility that after a measurement (time $t_2$ above), $A$ would know the ontic state of $S$ before the measurement (time $t_0$ above), which is what we call learning.
	The fact that $A$ may have sufficient information to determine the value of incompatible variables at an intermediate time during the measurement is not ruled out by the epistemic horizon we derive.

	\paragraph{Self-Measurement of Toy Subjects.} 
	The reason why $A$ cannot access the initial position of $S$ in the above example is its disturbance by the initial momentum of $A_1$, which is unknown.
	But could a toy subject measure its own momentum and with this information correct for the disturbance?
	
	As it turns out, such a self-measurement is implicitly accounted for by \cref{thm:meas_var_is_obs}.
	Our main result therefore gives an abstract argument why measuring one's own momentum cannot break the epistemic horizon.
	
	This stems from measurement disturbance again.
	Specifically, imagine a further toy subject $A_3$ that measures the momentum of $A_1$ before time $t_0$.
	This measurement disturbs the position of $A_1$ by a shift that depends on the (unknown) initial momentum of $A_3$.
	As a result, the momentum of $S$ is no longer fixed by $m_p$.
	Self-measurement thus does not resolve the issue, the source of the uncertainty has been merely shifted from the initial momentum of $A_1$ to that of $A_3$.
	One could imagine introducing further pointers to measure initial momenta, but this inevitable leads to an infinite regress that does not stabilise to a reliable knowledge of the relevant parameters.

	Our analysis of nomic toy theory implies that an epistemic horizon exists also in classical mechanics, given the assumption that agents modelled as mechanical systems can only directly access their own manifest variable (\cref{def:toy_subject}). 
	In contrast, traditional accounts claim that in classical mechanics arbitrary measurement precision can be achieved and that both position and momentum can be recorded simultaneously (see, for instance, \cite{measurement-disturbance-zanghi}). 
	However, a closer look at these arguments reveals that this holds only under the assumption that the initial momentum of the measurement apparatus is known\,---\,in line with our discussion above.
	
\subsection{Spekkens' Toy Theory as the Epistemic Counterpart of the Nomic Toy Theory}
\label{sec:epi_counterpart}

In this section we briefly discuss the connection between variables measurable in nomic toy theory and epistemic states in Spekkens' toy theory \cite{spekkens2016quasi}. 

While agents are not explicitly modelled as physical systems in Spekkens' toy theory, its epistemic restriction is introduced to specify what a hypothetical agent could learn about a physical system. 
	
Ontic states and the associated reversible transformations in Spekkens' toy theory match those of nomic toy theory.
	While the latter posits no a priori notion of epistemic states, these are explicitly specified in Spekkens' toy theory (see our description in \cref{sec:Spek_toy_theory} for more details).
	Specifically, each epistemic state corresponds to the value of a variable that can be known according to the epistemic restriction in Spekkens' toy theory (and vice versa).
	Among these `knowable' variables, the scalar-valued ones are called \textit{quadrature functionals}.
	A generic one, an affine map $f \colon \V \to \mathcal{F}$, can be written as
	\begin{equation}
		f \coloneqq a_1 q_1^T + b_1 p_1^T + ... + a_n q_n^T + b_n p_n^T + c,
	\end{equation} 
	where $\{q_1, \ldots, q_n, p_1, \ldots,  p_n \}$ is the chosen orthonormal basis of the phase space $\V$ and $a_j$, $b_j$, $c$ are all scalars in the field $\mathcal{F}$.
	As far as the resulting epistemic state is concerned, we can assume $c=0$ without loss of generality (cf.\ our notion of equivalence of variables introduced in \cref{sec:variables}).

	Generic (vector-valued) linear variable can be identified as a collection of quadrature functionals.
	The epistemic restriction of Spekkens' toy model says that such a collection is jointly knowable if and only if the Poisson bracket of each pair of them vanishes, i.e.\ $\{f_1,f_2\}=0$ in Spekkens' standard notation for quadrature functionals. 
	The theory postulates that variables whose value can be known are precisely Poisson variables in nomic toy theory as introduced in \cref{def:variable}.
	Furthermore, as we show in \cref{lem:fixed_from_measured,thm:meas_var_is_obs}, Poisson variables in nomic toy theory coincide with those properties of toy systems that can be learned by a toy subject within the world.
	In this way, the epistemic restriction of Spekkens' toy theory arises from two ingredients:
	\begin{itemize}
		\item[(1)] the allowed transformations of ontic states introduced in \cref{sec:toy_objects} (which coincide for nomic and Spekkens' toy theories), and
		\item[(2)] the specification of information gathering agents and identification of their directly accessible information in the form of manifest variables (\cref{def:toy_subject}).
	\end{itemize}
	While nomic toy theory does not come with a pre-specified epistemology, the second ingredient allows us to derive an epistemic horizon for the model of toy subjects used in this article.
	Doing so, we find that the derived epistemic aspects of subjects in nomic toy theory coincide\,---\,at least as far as epistemic states are concerned\,---\,with the posited epistemic horizon in Spekkens' toy theory.

\section{Conclusions}		
\label{sec:Conclusion}

	Let us now discuss implications of our results and related questions in the foundations of physics so as to put things into a broader perspective. 
	We discuss the significance of our work for the relationship of internal and external observers, representationalism, the subject-object split and the reality of unobserved properties (\cref{sec:discussion}). 
	We also comment on the relationship of nomic toy theory to quantum theory, and a possible view of physical phenomena that supersedes the subject-object separability.
	We then conclude with an outlook on future directions of study in \cref{sec:outlook}.
	
	\subsection{Discussion}
	\label{sec:discussion}
	
		\paragraph{Internal vs.\ External Perspective.} 
		\Cref{thm:meas_var_is_obs} can be interpreted as describing a relationship of two distinct perspectives. 
		One is the omniscient view that specifies the precise ontic state of every system in toy world, akin to the meticulous vision of the entire state of the toy universe by Laplace's demon. 
		This view is by definition from `outside', i.e.\ external to the world.
		Conversely, there is an internal perspective as experienced by an embedded toy subject. 
		This view is shown to be limited relative to the omniscient one. 
		As we prove, a subject in toy theory cannot learn the precise ontic state of another toy system by interacting with it.
		The best description it can have is an epistemic state, which necessarily retains uncertainty about the precise ontic state (see \cref{sec:Spek_toy_systems} for details).

		\paragraph{Subject-Object Inseparability.} 
		The	derived epistemic horizon emphasises the participatory nature of the subject in the theory. 
		It shows that the physically allowed information gathering activities of an agent affect the knowledge it can have about an object.
		This challenges the old Cartesian divide between the subject and object. 
		That is, our approach highlights that the standard notions of measurement, representability, and epistemology are intimately bound up. 

		Moreover, the construction of a toy subject measuring itself (\cref{sec:self-measurement}) introduces the possibility of self-reference, which in turn makes the knowledge of a toy subject liable to logical paradoxes.
		It is conceivable that our results could be linked to a logical argument about the impossibility for an observer to describe itself from within the world.
		In particular, recall that the crucial \cref{def:toy_subject} of toy subjects specifies what a toy subject knows about its own ontic state as well as how its knowledge is manifested in its ontic state. Relatedly, Ismael presents an argument for the instability in an embedded agent’s ability to know the future due to self-reference \cite{ismael2023open}.
		
		So it could be argued, perhaps, that what is `real' to one subject is not `real' to another. Furthermore, does it make sense for the subject to speak of a world as being separate from itself? 
		 What would a measurement outcome signify if we take the participatory nature of the subject seriously and abandon an observation-independent reality? What is the new referent of measurement? In other words, what supersedes the Cartesian subject-object split?

		\paragraph{Epistemic Horizons and their Implications for Ontology.}

			The idea that a physical theory may operate under the premise of an observer-dependent description is not new.
			Several interpretations of quantum theory take a similar stance, such as the non-realist \cite{Fuchs-Schack-QBism, Rovelli-RQM, sep-qm-copenhagen}, pragmatist \cite{Healey-pragmatist}, or Everettian \cite{Everett-QM} approaches.

		Nomic toy theory gives an explicit account of the interdependency of subjects and objects. 
		It invites us to study whether subjects are justified to posit the existence of ontic states that are only `visible' from an omniscient perspective. 
		Even though subjects in nomic toy theory are faced with an epistemic horizon, this limitation is compatible with a deterministic and classical description. 
		Can the same be said for other kinds of subjective experiences featuring an epistemic horizon, such as the one of quantum theory?
		Are there operational theories whose predictions \emph{rule out} the possibility that their epistemic horizon stems from the dynamical laws of a classical ontic theory?

		Making these questions precise requires a careful construction of a more general framework than our investigation of nomic toy theory and its symplectic dynamics.
		With it, one may hope to classify the kinds of epistemic horizons that could arise based on the allowed subject-object interactions just like the one we derive in this paper.
		Similar efforts have been successful in the framework of ontological models (a.k.a.\ hidden variable models), in which one can formally derive the operational consequences of metaphysical assumptions such as Bell locality \cite{Bell_Aspect_2004} and non-contextuality \cite{Kochen-Specker-theorem}.
		
		Importantly, the fact that the operational consequences of both are violated by behaviours of quantum systems constrains the possible underlying physical reality.
		Answering the questions from previous paragraphs would likely constitute an analogous step in understanding quantum theory and its viable interpretations.
		
		\paragraph{Relation to Interpretations of Quantum Theory.}
		Although we do not provide answers to the questions posed in the previous paragraphs, it is worthwhile to mention that a version of an observer-dependent realism aligns with the spirit of relational and QBist approaches to quantum mechanics \cite{Rovelli-RQM, Fuchs-Schack-QBism}.
		See also Barad's agential realism \cite{Barad-meeting-the-universe-halfway} and the quantum holism of Ismael and Schaffer \cite{ismael2020quantum}. 
		
		For	instance, relational quantum mechanics purports that the notion of a subject has no metaphysical significance{\,---\,}any physical system could be one.
		Moreover, it emphasises ``the way in which one part of nature manifests itself to any other single part of nature'' \cite[p.\ 67]{rovelli2021helgoland}. 
		In this view, properties of an object are relative to another system which interacts (and thus measures) the object. This resonates with the notion of the observer-dependent epistemic state in nomic toy theory. 
		
		Relational quantum mechanics, as well as many other interpretations, effectively posit that quantum properties \textit{do not exist} prior to measurement or that there is no way to consistently describe them (see, for instance, Wheeler's participatory nature \cite[pp.\ 182--213]{wheeler-zurek-1983}). 
		This is in contrast with the ontic status of unobservable variables in Spekkens' toy theory (and thus also nomic toy theory). 
		There, we have an epistemic horizon featuring unpredictability, uncertainty, and complementarity, even though all properties of systems exist and have definite values at all times (at least from the omniscient perspective featuring the full ontic state description). From a toy subject's perspective, however, the view is very similar to one invoking participatory `realism'. The Cartesian subject-object divide can be therefore called into question even given a deterministic physical theory.

		Furthermore, recall the intuition that the epistemic horizon of nomic toy theory is connected to the uncontrollable initial momentum of toy subjects, which introduces an unpredictable disturbance of the toy object (\cref{sec:self-measurement}). 
		This implies that a toy subject measuring position after a measurement of momentum (\cref{fig:pos_mom_meas}) may find a different position value than a toy subject measuring position prior to the measurement of momentum. 
		More generally, \cref{thm:meas_var_is_obs} implies that there is no simultaneous measurement of both position and momentum{\,---\,}they are incompatible.
		In quantum theory, the incompatibility structure of observables leads to contextuality \cite{Kochen-Specker-theorem}. 
		In contrast, Spekkens' toy theory is non-contextual{\,---\,}all measurements have pre-determined outcomes.
		In this case, the incompatibility can be seen merely as an expression of measurement disturbance.

\paragraph{Limitations on Predictability.}
	We suspect that our result also implies that a toy subject cannot \textit{prepare} the toy object in a fixed ontic state. 
	Intuitively, this would follow from the fact that in order to prepare an exact ontic state, a subject would need to perform a measurement that signifies the preparation of this state. But as we have shown such a measurement process does not exist (a similar claim was proven by \cite{hausmann2023toys} in the context of Spekkens' toy theory).

	\subsection{Summary and Outlook}
	\label{sec:outlook}
		We have used nomic toy theory{\,---\,}an essentially \textit{classical} theory{\,---\,}to propose an explicit account of the source of the epistemic horizon in Spekkens' toy theory. 
		Subjects in nomic toy theory can only ever ascertain a coarse-grained description of objects in the world, namely one in terms of the epistemic states of Spekkens' toy theory. 
		We attribute the source of the fundamental uncertainty to the nature of interactions between subjects and objects.
		Specifically, the learning process governed by such an interaction is invariably connected to a disturbance of the object, which prevents the subject from learning the complete state of the object. 

		At first glance, our result may be surprising in light of the claims that Newtonian mechanics should in principle allow for arbitrarily precise measurements of the properties of a classical particle. 
		Bear in mind that Liouville's theorem in Hamiltonian mechanics implies preservation of phase space volume, but does not rule out arbitrary stretching and squeezing of a phase space volume such that conjugate variables become simultaneously sharply defined.
		However, we suspect that our result could be related to the claims of de Gosson on the relationship of symplectic geometry and quantum uncertainty principles \cite{Gosson-symplectic-camel-uncertainty}. 
		Basically, de Gosson derived an analogue of the quantum Robertson--Schrödinger inequality from the symplectic properties of the phase space alone. 
		This essentially implies that Heisenberg's uncertainty relations already hold in Hamiltonian mechanics for all pairs of conjugate position and momentum variables.

		Why does it seem that some aspects of quantum uncertainty can be explained in terms of Hamiltonian mechanics? 
		Do uncertainty relations really have such an analogue in classical physics? 
		Can the epistemic horizon in nomic toy theory be restated as a classical uncertainty relation akin to Heisenberg's uncertainty principle in quantum theory? 
		We hope that our analysis will serve as a toy example to facilitate explorations of those pertinent questions.

		For instance, one may study the role of hidden variables in quantum theory. 
		Our result derives the consequences of positing a specific classical ontology for the learning capabilities of internal agents. 
		Not all underlying ontic models may lead to the same information gathering capabilities of agents. 
		Thus, the empirically observed epistemic horizon could potentially be used to rule out the ontological models that do not reproduce it. 
		More on this is found in the ``implications for ontology'' part of the Discussion (\cref{sec:discussion}).

		A related question concerns the development of ontological models motivated and evaluated from within nomic toy theory. 
		That is, one may investigate what kind of ontologies are consistent with the experience of an epistemically restricted toy subject.
		Is there a way to differentiate among them based on desiderata such as parsimony or naturalness? 
		In a nutshell, what would such a subject conclude about the ontology underlying the phenomena observed? 
		See also a potential link to problems about bootstrapping and reliabilist epistemology \cite{sep-reliabilism}.
		
		As one possibility, one could look at nomic toy theory in an Everettian setting where pointer states are not single valued. 
		Could a many-worlds ontology lead to a single-world experience of the toy subject \cite[Chapter 2]{many-worlds-barrett-et-al}? 
		It would be interesting to study the problem of Everettian probabilities in this context \cite[Chapter 3]{many-worlds-barrett-et-al}. 
		There may also exist connections to more elaborate models of agents such as those in \cite{shrapnel2023stepping}.
		
		In the future we also wish to shed light on multi-agent scenarios.
		A recent attempt to try to view quantum theory as an integration of perspectives of agents subject to the epistemic horizon of Spekkens' toy theory has been explored in \cite{braasch2022classical}.
		It is particularly interesting to look at what different subjects can communicate intersubjectively (see also related ideas in the context of Spekkens' toy model \cite{hausmann2023toys}). 
		This may perhaps allow novel insights into the intricacies of many recently studied Wigner's friend type scenarios as well as no-go claims on `observer-independent facts' \cite{Wigner1961-mind-body, Cavalcanti-local-friendliness-no-go, frauchiger2018quantum, Lawrence2023relativefactsof, Brukner-observer-independent, ormrod2022nogo}.
		See also the reviews in \cite{adlam-absoluteness,  schmid2023review, Brukner-relational-objectivity} and more general results on quantum epistemic boundaries \cite{fankhauser2023epistemic}.
		
		We also leave open the question whether the participatory nature of the agent in our toy theory entails a more parsimonious account of the physical world. 
		Could there exist a new \textit{relational} physical state of the world relative to the internal observers of the theory describing the subject and object jointly?
		Such an account would go beyond the old Cartesian split and take the inseparability of subjects and objects seriously.

\section*{Acknowledgements}

	This work was supported by the Start Prize Y1261-N of the Austrian Science Fund (FWF).
	For open access purposes, the authors have applied a CC BY public copyright license to any accepted manuscript version arising from this submission.

\bibliographystyle{abbrvnat}
\bibliography{biblio-paperalpha}

\appendix

\section{Composing Position and Momentum Measurements}
\label{sec:toy_meas_examples}
	
	Here, we give additional details on the attempted construction of a joint measurement of position and momentum from \cref{sec:self-measurement}.
	Specifically, we consider three toy systems\,---\,$S$, $A_1$, and $A_2$\,---\,each of which has one position and one momentum degree of freedom.
	Moreover, the latter two are toy subjects with their positions acting as manifest variables (see \cref{fig:pos_mom_meas}).
	
	The joint system $A_1 \oplus S \oplus A_2$ starts out at time $t_0$ in the ontic state denoted by
	\begin{equation}
		u(t_0) = \colvec{u_1, u_2}_{\A_1} \oplus \colvec{u_3, u_4}_{\S} \oplus \colvec{u_5, u_6}_{\A_2}
	\end{equation}
	in the $\{q_1, p_1, q_{\S}, p_\S, q_2, p_2\}$ basis of $\A_1 \oplus \S \oplus \A_2$.
	
	The first interaction $m_p$ is a measurement of the momentum of $S$ by the toy subject $A_1$.
	Just as at the end of \cref{sec:main_theorem}, we substitute the matrix form 
	\begin{equation}
		\begin{pmatrix}
			0 & 1
		\end{pmatrix}
	\end{equation}
	of the momentum variable into \cref{eq:meas_comp_obs} to obtain the matrix form of $m_p$:
	\begin{equation}
		M_p = 
			\begin{pmatrix}
				1 & 0 & 0 & 1 \\
				0 & 1 & 0 & 0 \\
				0 & 1 & 1 & 0 \\
				0 & 0 & 0 & 1 
			\end{pmatrix} ,
	\end{equation}
	where one ought to be careful that the subject and object are now in reverse order compared to \cref{eq:meas_comp_obs}.
	Here, we merely write its action on $\A_1 \oplus \S$.
	The action on the full joint state space is then via $M_p \oplus \id_{\A_2} $.

	At time $t_1$, i.e.\ once the interaction $m_p$ has taken place, the joint state of all three toy systems is thus
	\begin{equation}
		u(t_1) = \colvec{u_1+u_4, u_2}_{\A_1} \oplus \colvec{u_2 + u_3, u_4}_{\S} \oplus \colvec{u_5, u_6}_{\A_2}.
	\end{equation}
	As we can see, the manifest variable of $A_1$ now encodes the initial momentum of $S$, provided that $A_1$ started out in a ready state.
	Furthermore, the position of $S$ has been disturbed by the initial momentum of $A_1$.

	The second step of the composite transformation depicted in \cref{fig:pos_mom_meas} is a measurement $m_q$ of the position of $S$ by the toy subject $A_2$.		
	Its matrix form is as in \cref{eq:pos_meas}:
	\begin{equation}
		M_q =  \begin{pmatrix}
			1 & 0 & 0 & 0 \\
			0 & 1 & 0 & -1 \\
			1 & 0 & 1 & 0 \\
			0 & 0 & 0 & 1 
		\end{pmatrix}. 
	\end{equation}
	After this interaction, at time $t_2$, the full ontic state is  
	\begin{equation}
		u(t_2) =  \colvec{u_1 + u_4, u_2}_{\A_1} \oplus \colvec{u_2 + u_3 , u_4-u_6}_{\S} \oplus \colvec{u_2+u_3+u_5, u_6}_{\A_2}.
	\end{equation}	
	If we assume the ready states of the toy subjects have vanishing manifest variables, this reduces to 
	\begin{equation}
		\colvec{u_4, u_2}_{\A_1} \oplus \colvec{u_2 + u_3, u_4-u_6 }_{\S} \oplus \colvec{u_2+u_3, u_6}_{\A_2}.
	\end{equation}
	The values of the manifest variables at time $t_2$ are thus $u_4$ and $u_2+u_3$ respectively.
	The former encodes the correct momentum of $S$ at times $t_0$ and $t_1$, while the latter encodes the correct position of $S$ at times $t_1$ and $t_2$.

\section{Supplementary Material on Spekkens' Toy Theory}
\label{sec:Spek_toy_theory}
	
	As we mention throughout the text,  nomic toy theory shares both the kinematics and dynamics with Spekkens' toy theory \cite{Spekkens-2007-knowledge-balance}.
	This is not an accident.
	We are specifically interested in the latter because it features both an epistemic restriction as well as deterministic dynamics at the ontic level.
	As we discuss in \cref{sec:epi_counterpart}, our results show that the epistemic restriction of Spekkens' toy theory coincides with the epistemic horizon of nomic toy theory that we derive.
	To make this precise, we provide a description of the epistemic level of Spekkens' toy theory here including several auxilliary results.
	Our presentations closely follows that of \cite{hausmann2021consolidating}.
	For additional details on Spekkens' toy theory, see \cite{spekkens2016quasi,catani2017spekkens}.
	
	\subsection{Systems in Spekkens' Toy Theory}
	\label{sec:Spek_toy_systems}

		The ontic state space of a system $V$ is a symplectic vector space $\V$, just as we discuss in \cref{sec:toy_objects}. 
		\begin{remark}
			If the underlying field of $\V$ is that of real numbers, we obtain continuous toy systems.
			Basic finite systems are associated with an integer $d$. 
			Their ontic state space is a (symplectic) $\mathbb{Z}_d$-module, which is a vector space if $d$ is a prime power.
			Other finite systems can be obtained as composites of the basic ones (see \cref{sec:composition}).
		\end{remark}
		
		For a linear subspace $\W$ of $\V$, we can define the \textbf{symplectic complement}
		\begin{equation}
			\W^{\omega} \coloneqq \Set{ v \in \V  \given  \omega ( \W, v ) = 0}
		\end{equation}
		where 
		\begin{equation}
			 \omega ( \W, v ) = 0  \quad \coloniff \quad   \omega ( x, v ) = 0 \quad \forall \, x \in \W.
		\end{equation}
		Such a subspace $\W$ is 
		\begin{itemize}
			\item a \textbf{symplectic subspace} if $\W^{\omega} \cap \W = \{0\}$,
			\item \textbf{isotropic} if $\W \subseteq \W^{\omega}$, i.e.\ if the symplectic form vanishes on $\W$, and
			\item Lagrangian if $\W = \W^{\omega}$, i.e.\ if it is a maximal isotropic subspace (cf.\ \cref{def:Lagrangian}). 
		\end{itemize}
		
		An \textbf{epistemic state} of Spekkens' toy theory $(\U,a)$ is specified by an isotropic subspace $\U$ of $\V$ and a vector $a \in \V$.
		Via an isomorphism of $\V$ and its dual $\V^*$, the subspace $\U$ is interpreted as consisting of those functionals whose values are known.
		Alternatively, we can think of $\U$ as the set of values of the orthogonal projection $U \colon \V \to \U$.
		This is an isotropic variable if and only if $\U$ is isotropic.

		The vector $a$ is interpreted as one of the ontic states that is deemed possible by this epistemic state.

		It fixes the value of any functional $u \in \U$ to be
		\begin{equation}
			\langle u, a \rangle
		\end{equation}
		where $\langle \ph, \ph \rangle$ is the canonical inner product on $\V$.
		Thus, the set of all ontic states that are possible according to the epistemic state $(\U,a)$ is
		\begin{equation}\label{eq:support}
			\U_a \coloneqq \Set*[\big]{ v \in \V  \given  \langle u, a \rangle = \langle u, v \rangle  \;\; \forall \, u \in \U } = \U^\perp + a,
		\end{equation}
		where $\U^\perp$ is the orthogonal complement of $\U$.
		In other words, the possible ontic states must share the value of the variable $U$.
		We call $\U_a$ the \textbf{support} of the epistemic state $(\U,a)$.
		Note that it is an affine subspace of $\V$.
		We do not distinguish between epistemic states that have the same support.
		An epistemic state $(\U,a)$ is called \textbf{pure} if $\U$ is Lagrangian.
		
	The reversible transformations of Spekkens' toy theory form the affine symplectic group (\cref{sec:toy_objects}) and act on ontic states via the canonical action.
		That is, its elements are pairs $(t,v)$ of a symplectic map $t \in \mathrm{Sp}(\V)$ and a vector $v \in \V$, which compose via 
		\begin{equation}
			(s,u) \circ (t,v) = \bigl(s \circ t, u + s(v) \bigr).
		\end{equation}
		A given reversible transformation $(t,v)$ then acts on ontic states via ${x \mapsto t(x) + v}$.
		\begin{figure}[t]\centering
		\includegraphics[width=.65\columnwidth]{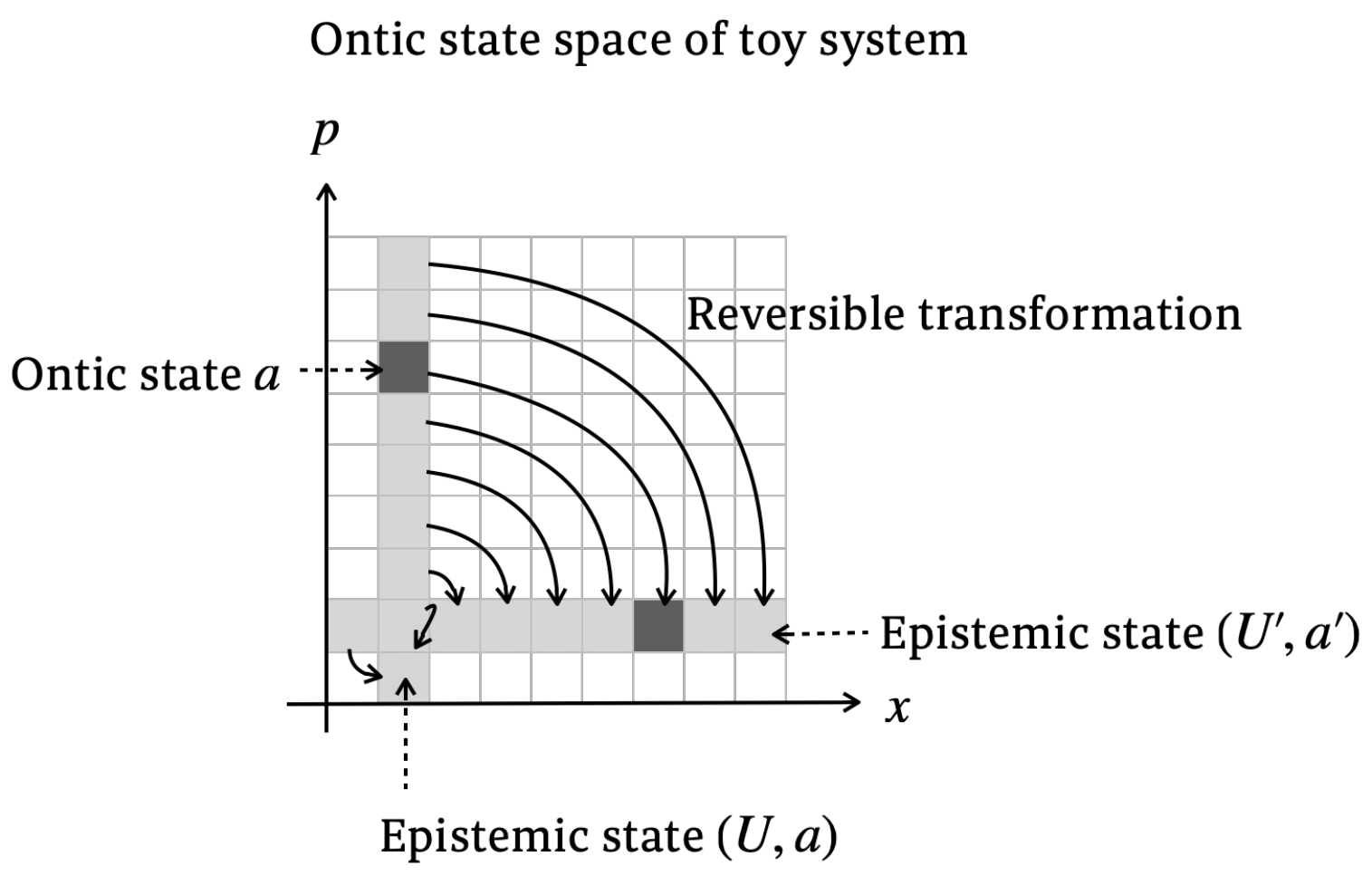}
		\caption{While the ontic state $a$ is an element in the ontic state space of the toy system, the support of an epistemic state $(\U,a)$ is a collection of such elements, namely those given by \cref{eq:support}. 
			After a reversible transformation, these are transformed to another collection $(\U',a')$ given by \cref{eq:maps_preserve_supp}, where $a'$ is the image of $a$ under the transformation.
			In the figure, $U$ is the position variable and $U'$ is the momentum variable.}
		\label{fig:ontic-state-space}
		\end{figure}
		
		The following proposition shows that epistemic states are mapped to epistemic states under affine symplectic transformations and derives Equation (A.3) \mbox{from \cite{hausmann2023toys}}.
		\begin{proposition}\label{prop:maps_preserve_supp}
			Let $(f,v)$ be an affine symplectic map on a symplectic vector space $\V$ and let $\U_a$ be the support of an epistemic state $(\U,a)$.
			The affine subspace $f(\U_a) + v$, which is the image of $\U_a$ under $(f,v)$, coincides with the support of the epistemic state
			\begin{equation}\label{eq:maps_preserve_supp}
				\left( {f^{-1}}^T(\U), f(a) + v \right).
			\end{equation}
		\end{proposition}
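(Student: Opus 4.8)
The plan is to establish two separate facts: first, that the pair in \eqref{eq:maps_preserve_supp} is a genuine epistemic state, i.e.\ that the subspace ${f^{-1}}^T(\U)$ is isotropic; and second, that its support, as computed by the formula $\U_a = \U^\perp + a$ from \eqref{eq:support}, coincides with the set-theoretic image $f(\U_a) + v$ of the original support. Throughout, I would take care to distinguish the \emph{orthogonal} complement $\ph^\perp$ (taken with respect to the canonical inner product $\langle \ph, \ph \rangle$) that enters the support formula from the \emph{symplectic} condition defining isotropy, since the argument hinges precisely on how $\Omega$ mediates between these two notions.

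For the first fact, I would invoke that $\mathrm{Sp}(\V)$ is a group (hence closed under inversion) and that the transpose of a symplectic matrix is again symplectic, a property already recorded in the proof of \cref{thm:meas_var_is_obs}. Consequently ${f^{-1}}^T \in \mathrm{Sp}(\V)$, and a symplectic map preserves the symplectic form, so $\omega\bigl({f^{-1}}^T u_1, {f^{-1}}^T u_2\bigr) = \omega(u_1, u_2)$ for all $u_1, u_2 \in \U$. Since $\U$ is isotropic the right-hand side vanishes, so ${f^{-1}}^T(\U)$ is isotropic and \eqref{eq:maps_preserve_supp} is a legitimate epistemic state. This is the only step that uses the symplectic (rather than merely invertible) nature of $f$.

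For the second fact, write $\U' \coloneqq {f^{-1}}^T(\U)$ and $a' \coloneqq f(a) + v$. By \eqref{eq:support}, the support of $(\U', a')$ is $(\U')^\perp + a'$, while the image of $\U_a = \U^\perp + a$ under $x \mapsto f(x) + v$ is $f(\U^\perp) + f(a) + v$. The additive shifts $f(a) + v$ already agree, so it remains to prove the purely linear identity $(\U')^\perp = f(\U^\perp)$. This I would deduce from the general fact that for any invertible $g$ one has $(gW)^\perp = (g^{-1})^T(W^\perp)$, which is immediate from the chain $y \in (gW)^\perp \iff \langle g^T y, w\rangle = 0 \ \forall w \in W \iff g^T y \in W^\perp$. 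Taking $g = {f^{-1}}^T$ and $W = \U$, and noting $g^{-1} = f^T$ so that $(g^{-1})^T = f$, yields $(\U')^\perp = f(\U^\perp)$, which completes the proof.

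I expect the only genuine obstacle to be the bookkeeping of transposes and inverses in the final identity, where an ordering slip could invert the wrong factor; everything else is a direct unwinding of the definitions of support and isotropy. A sanity check I would run is the figure's example, where $U$ is the position variable and $U'$ the momentum variable under a symplectic exchange, to confirm that the formula ${f^{-1}}^T(\U)$ indeed produces the expected complementary subspace.
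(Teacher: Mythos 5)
Your proposal is correct and follows essentially the same route as the paper: both verify that ${f^{-1}}^T$ is symplectic (hence preserves isotropy) and then carry out the same adjoint computation, moving $f^{-1}$ across the inner product to ${f^{-1}}^T$. The only cosmetic difference is that you package that computation as the subspace identity $\bigl({f^{-1}}^T(\U)\bigr)^\perp = f(\U^\perp)$ applied to the form $\U_a = \U^\perp + a$ of the support, whereas the paper manipulates the defining condition $\langle u, a\rangle = \langle u, f^{-1}(w-v)\rangle$ element-wise; the content is identical.
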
	

		\begin{proof}
			First, let us show that \eqref{eq:maps_preserve_supp} is indeed an epistemic state.
			To this end, note that the inverse of any symplectic matrix $M \in \mathrm{Sp}(\V)$ is given by
			\begin{equation}\label{eq:symp_inv}
				M^{-1} = \Omega^T M^T \Omega.
			\end{equation}
			Therefore, ${f^{-1}}^T$ is given by 
			\begin{equation}\label{eq:inv_transp}
				{f^{-1}}^T (v) = - \Omega F \Omega v
			\end{equation}
			where $F$ is the matrix representation of $f$.
			In particular, it is also a symplectic map.
			By \cref{lem:sympl_isotr} proven below, the image of $\U$ under ${f^{-1}}^T$ is an isotropic subspace and \eqref{eq:maps_preserve_supp} is thus an epistemic state.
			
			The rest of the proof establishes that $f(\U_a) + v$ is the support of this epistemic state.
			We have
			\begin{equation}\label{eq:maps_preserve_supp_1}
				f(\U_a) + v = \Set*[\big]{ f(x) + v  \given  \langle u, a \rangle = \langle u, x \rangle \;\; \forall \, u \in \U }
			\end{equation}
			by definition.
			Let us denote $f(x) + v$ by $w$, so that we have $x = f^{-1}(w-v)$.
			Then, the right-hand side of \cref{eq:maps_preserve_supp_1} is the set of all $w \in \V$ satisfying
			\begin{equation}\label{eq:maps_preserve_supp_2}
				\langle u, a \rangle = \left\langle u, f^{-1}(w-v) \right\rangle \;\; \forall \, u \in \U.
			\end{equation}
			Since we have $f^T {f^{-1}}^T = \I$, the left-hand side of \cref{eq:maps_preserve_supp_2} is equal to either side of 
			\begin{equation}
				\left\langle f^T {f^{-1}}^T (u), a \right\rangle = \left\langle {f^{-1}}^T (u), f(a) \right\rangle,
			\end{equation}
			while the right-hand side of \cref{eq:maps_preserve_supp_2} is 
			\begin{equation}
				\left\langle {f^{-1}}^T (u), w - v \right\rangle .
			\end{equation}
			Thus we obtain
			\begin{equation}\label{eq:maps_preserve_supp_3}
				f(\U_a) + v = \Set{ w  \given  \left\langle {f^{-1}}^T (u), f(a) + v \right\rangle = \left\langle {f^{-1}}^T (u), w \right\rangle \;\; \forall \, u \in \U },
			\end{equation}
			which is the support of the epistemic state in \eqref{eq:maps_preserve_supp}.
		\end{proof}
		Therefore, reversible transformations preserve the set of epistemic states.
		In other words, if a function $f \colon \V \to \V$ maps (the support of) some epistemic state to a subset of $\V$ that is not (the support of) an epistemic state, then $f$ is not a valid reversible transformation.
		For example, this directly implies Corollary 1 (Restrictions on conditional transformations: example) in \cite{hausmann2023toys}.

	\begin{lemma}\label{lem:sympl_isotr}
		Symplectic maps preserve the set of isotropic subspaces.
		That is, if $f \colon \V \to \V$ is a symplectic map and $\W$ is an isotropic subspace of $\V$, then $f(\W)$ is also an isotropic subspace.
	\end{lemma}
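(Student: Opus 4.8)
The plan is to unwind the definitions and reduce the claim to the single fact that a symplectic map preserves the symplectic form. First I would recall that, by \cref{eq:omega}, the symplectic form is given by $\omega(x,y) = x^T \Omega y$, and that the defining condition $M^T \Omega M = \Omega$ from \cref{eq:sympl_group} for the matrix representation $M$ of $f$ translates directly into the statement
\begin{equation}
	\omega\bigl(f(x), f(y)\bigr) = (Mx)^T \Omega (My) = x^T M^T \Omega M y = x^T \Omega y = \omega(x,y)
\end{equation}
for all $x, y \in \V$. This is the one computation that carries the whole argument, and it is where the hypothesis that $f$ is symplectic is actually used.

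Next I would argue that $f(\W)$ is a linear subspace of $\V$. Since $f$ is an invertible linear map (elements of $\mathrm{Sp}(\V)$ lie in $\mathrm{GL}(\V)$), the image of the subspace $\W$ is again a subspace, of the same dimension as $\W$. So it only remains to check that $f(\W)$ satisfies the isotropy condition, namely that $\omega$ vanishes on it.

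To verify isotropy, I would take two arbitrary elements of $f(\W)$, which are necessarily of the form $f(x)$ and $f(y)$ for some $x, y \in \W$. Applying the displayed identity and then using that $\W$ is isotropic (so $\omega(x,y) = 0$), I obtain $\omega\bigl(f(x), f(y)\bigr) = \omega(x,y) = 0$. Since this holds for every pair of elements of $f(\W)$, the subspace $f(\W)$ is isotropic, which completes the proof.

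I do not expect any real obstacle here: the lemma is essentially a direct consequence of the definition of a symplectic map, and the only subtlety worth flagging explicitly is the translation of the matrix condition $M^T \Omega M = \Omega$ into the coordinate-free statement that $f$ preserves $\omega$. Everything else is bookkeeping, so I would keep the written proof to a few lines.
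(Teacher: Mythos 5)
Your proof is correct and rests on the same single fact as the paper's: a symplectic map preserves $\omega$, so isotropy is transported to the image. The paper happens to run the argument through $f^{-1}$ (pulling elements of $f(\W)$ back to $\W$) rather than pushing forward as you do, but this is a cosmetic difference, not a different route.
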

	This is a standard result, we give the proof for completeness.
	\begin{proof}
		Note that a subspace $\W$ is isotropic if and only if the implication
		\begin{equation}
			v \in \W  \quad \implies \quad  \omega (x , v) = 0  \quad  \forall \, x \in \W
		\end{equation}
		holds.
		Moreover, since $f$ is bijective, we have $y \in f(\W)$ if and only if $y = f(v)$ for some $v \in \W$.
		Thus we have
		\begin{align}
			y \in f(\W)  \quad &\,\implies \quad  \omega \left( x , f^{-1}(y) \right) = 0  \quad  \forall \, x \in \W \\
			&\iff \quad  \omega \left( f^{-1}(z) , f^{-1}(y) \right) = 0  \quad  \forall \, z \in f(\W) \\
			&\iff \quad  \omega \left( z , y \right) = 0  \quad  \forall \, z \in f(\W) 
		\end{align}
		where the last equivalence holds because $f^{-1}$ is itself symplectic.
		In conclusion, $f(\W)$ is isotropic.
	\end{proof}
	Note, furthermore, that symplectic maps in $\mathrm{Sp}(V)$ act transitively on the Lagrangian Grassmanian \cite[Lemma 1.12]{calegari2022notes}.	
		
\subsection{Description of multiple systems in Spekkens' toy theory}	
\label{sec:nomic_multiple_systems}

	\subsubsection{Joint states}\label{sec:composition}

	Each ontic state of the joint (bipartite) system is given by a pair of ontic states from each of the components respectively.
	Its underlying vector space is thus the direct product of the individual ones, which is isomorphic to their direct sum.
	\begin{definition}
		Given two toy systems $(\V_1, \omega_1)$ and $(\V_2, \omega_2)$, the \textbf{joint system} describing their composite is given by $(\V_1 \oplus \V_2, \omega_1 \oplus \omega_2)$.
	\end{definition}
	Every joint ontic state $v \in \V \coloneqq \V_1 \oplus \V_2$ has a unique decomposition $v = v_1 + v_2$ for $v_i \in \V_i$.
	Moreover, there are linear projections $V_i \colon \V \to \V_i$, such that $V_i$ maps $v$ to $v_i$. 
	Similarly, for any choice of epistemic states $(\U_1,a_1)$, and $(\U_2,a_2)$ of $\V_1$ and $\V_2$ respectively, the joint state of $\V_1 \oplus \V_2$ is the epistemic state $(\U_1 \oplus \U_2, a_1 + a_2)$ with support\footnotemark{}
	\begin{equation}
		\left( \U_1 \oplus \U_2 \right)^\perp + a_1 + a_2 = \left( \U_1^\perp + a_1 \right) \oplus \left( \U_2^\perp + a_2 \right).
	\end{equation}
	\footnotetext{Note that on the right-hand side, $\U_i^\perp$ refers to the orthogonal complement of $\U_i$ \emph{within} $\V_i$, as opposed to the left-hand side, where it denotes the orthogonal complement in $\V$.}%
	These constitute the so-called \textbf{product states} of the joint system.
	
	Besides product states, there are also correlated joint states.
	As an example, consider the joint system of two toy bits with its epistemic state given by
	\begin{equation}
		\left( \Span\left\{ \colvec{1,0,-1,0}, \colvec{0,1,0,-1} \right\}, \colvec{0,0,0,0} \right). 
	\end{equation}
	It is a state for which both the positions and momenta of the two systems are perfectly correlated.
	Its support is the subset of $(\mathbb{Z}_2)^4$ given by
	\begin{equation}
		\left\{ \colvec{0,0,0,0}, \colvec{0,1,0,1}, \colvec{1,0,1,0}, \colvec{1,1,1,1} \right\} .
	\end{equation}

	\subsubsection{Reduced states}\label{sec:marginalization}
	
	\Cref{sec:composition} describes global states of multiple systems. 
	For any such global state, we can marginalize any of its subsystems to obtain the local description of the remaining subsystems.
	This notion also appears in \cref{def:measurement} of pointer-preserving measurements.
	
	\begin{definition}
		Given a possibilistic state\footnotemark{} $\rho$ of a composite system $\V_1 \oplus \V_2$, its $\bm{\V_i}$\textbf{-marginal} (also referred to as the reduced state to $\V_i$) is the image of $\rho$ under the projection $V_i$.
	\end{definition}
	\footnotetext{A possibilistic state of a toy system is a subset of its underlying vector space of ontic states.
		Key examples of possibilistic states are supports of epistemic states.}%
	
	Whenever $\rho$ is the support of an epistemic state $(\U,a)$, we can find its marginal by projecting $a$ and restricting the set of known functionals in $\U$ to the local ones.
	\begin{proposition}[Marginals of epistemic states]\label{lem:ep_st_marginal}
		Consider an epistemic state $(\U,a)$ of the composite $\V = \V_1 \oplus \V_2$.
		Then the $\V_1$-marginal of its support is the support of the epistemic state of $\V_1$ given by
		\begin{equation}\label{eq:ep_st_marginal}
			\bigl( \U \cap \V_1, V_1(a) \bigr).
		\end{equation}
	\end{proposition}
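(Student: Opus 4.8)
The plan is to reduce the statement to a purely linear-algebraic identity about orthogonal complements and projections, and then to establish that identity by a short two-sided inclusion argument. Recalling from \eqref{eq:support} that the support of $(\U,a)$ is $\U^\perp + a$, the $\V_1$-marginal is $V_1(\U^\perp + a) = V_1(\U^\perp) + V_1(a)$ by linearity of the projection $V_1$. On the other hand, the support of the proposed epistemic state \eqref{eq:ep_st_marginal} is $(\U \cap \V_1)^\perp + V_1(a)$, where the orthogonal complement is taken within $\V_1$. Hence it suffices to prove the single identity
\begin{equation}
  V_1(\U^\perp) = (\U \cap \V_1)^\perp.
\end{equation}
I would first record that, since the chosen bases of $\V_1$ and $\V_2$ together form an orthonormal basis of $\V = \V_1 \oplus \V_2$, the canonical inner product on $\V$ is the orthogonal direct sum of those on the two factors; in particular $\langle w, u\rangle = \langle V_1(w), V_1(u)\rangle + \langle V_2(w), V_2(u)\rangle$ for all $w, u \in \V$.

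For the inclusion $\subseteq$, I would take $w \in \U^\perp$ and an arbitrary $u \in \U \cap \V_1$. Since $u \in \V_1$ we have $V_2(u) = 0$, so the above decomposition gives $\langle V_1(w), u\rangle = \langle w, u\rangle$, which vanishes because $u \in \U$ and $w \in \U^\perp$. As $u$ ranges over $\U \cap \V_1$, this shows $V_1(w) \in (\U \cap \V_1)^\perp$.

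The reverse inclusion $\supseteq$ is the crux. Given $x \in (\U \cap \V_1)^\perp \subseteq \V_1$, I would construct a correction term $y \in \V_2$ so that $w \coloneqq x + y$ lies in $\U^\perp$ while $V_1(w) = x$. Using the inner-product decomposition, the requirement $w \in \U^\perp$ reads $\langle y, V_2(u)\rangle = -\langle x, V_1(u)\rangle$ for all $u \in \U$. The map $u \mapsto -\langle x, V_1(u)\rangle$ is a linear functional on $\U$, and the key observation is that it factors through $V_2$: if $V_2(u) = V_2(u')$ then $u - u' \in \U \cap \V_1$, whence $x \in (\U \cap \V_1)^\perp$ forces $\langle x, V_1(u - u')\rangle = 0$. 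This yields a well-defined functional on $V_2(\U) \subseteq \V_2$; extending it to all of $\V_2$ and representing it via the non-degenerate canonical inner product produces the desired $y$. Setting $w = x + y$ then gives $V_1(w) = x$ and $w \in \U^\perp$, which completes the argument.

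I expect the well-definedness of the functional on $V_2(\U)$\,---\,the factoring-through-$V_2$ step, which is precisely where the hypothesis $x \in (\U \cap \V_1)^\perp$ is used\,---\,to be the main obstacle; everything else is bookkeeping. (Alternatively, one could replace the construction of $y$ by a dimension count: combining the inclusion $\subseteq$ with the identities $\dim V_1(\U^\perp) = \dim\V_1 - \dim\U + \dim V_2(\U)$ and the rank--nullity relation $\dim\U = \dim(\U\cap\V_1) + \dim V_2(\U)$ shows that the two subspaces have equal dimension and therefore coincide. I prefer the direct construction, since it does not rely on finite-dimensionality.)
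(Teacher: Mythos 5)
Your proof is correct, and the reduction is the same as the paper's: both arguments strip off the affine shift $V_1(a)$ and reduce the claim to the linear identity $V_1(\U^\perp) = (\U \cap \V_1)^\perp$ (complement taken in $\V_1$). Where you diverge is in how that identity is established. The paper cites the standard duality fact $(\U \cap \V_1)^\perp = \U^\perp + \V_1^\perp$ (complements in $\V$) from an external lemma and then finishes with a one-line direct-sum manipulation, $\bigl(\U^\perp + \V_1^\perp\bigr) \cap \V_1 = \bigl(V_1(\U^\perp) \oplus \V_1^\perp\bigr) \cap \V_1 = V_1(\U^\perp)$. You instead prove the identity from scratch by double inclusion, with the nontrivial direction handled by explicitly constructing a correction term $y \in \V_2$ so that $x + y \in \U^\perp$; the well-definedness of the functional on $V_2(\U)$ is exactly where the hypothesis $x \in (\U\cap\V_1)^\perp$ enters, and your argument there is sound (if $V_2(u)=V_2(u')$ then $u-u'\in\U\cap\V_1$). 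In effect you are re-deriving the hard inclusion of the cited duality lemma restricted to $\V_1$, so your proof is more self-contained at the cost of more work; the paper's is shorter but leans on an imported fact. One small caveat: your closing remark that the direct construction avoids finite-dimensionality is not quite right, since extending the functional from $V_2(\U)$ to all of $\V_2$ and then representing it via the inner product does require either finite dimension or additional structure (completeness and continuity for a Riesz-type representation); in the paper's finite-dimensional setting this is harmless.
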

	\begin{proof}
		The $\V_1$-marginal of (the support of) $(\U,a)$ is 
		\begin{equation}
			V_1 \bigl( \U^\perp \bigr) + V_1(a),
		\end{equation}
		while the support of the epistemic state in \eqref{eq:ep_st_marginal} is 
		\begin{equation}\label{eq:ep_st_marginal_2}
			\left( \U \cap \V_1 \right)^\perp + V_1(a),
		\end{equation}
		where the orthogonal complement is within $\V_1$ here.
		The task is to show that these two affine subspaces of $\V_1$ coincide.
		Writing expression $\eqref{eq:ep_st_marginal_2}$ instead in terms of the orthogonal complement within $\V$, we thus have to show
		\begin{equation}\label{eq:ep_st_marginal_3}
			V_1 \bigl( \U^\perp \bigr) = \left( \U \cap \V_1 \right)^\perp \cap \V_1.
		\end{equation}
		
		It is an elementary fact that $\left( \U \cap \V_1 \right)^\perp = \U^\perp + \V_1^\perp$ holds, see for example \cite[Lemma B.3]{hausmann2023toys}.
		Therefore, we can rewrite the right-hand side of \cref{eq:ep_st_marginal_2} as 
		\begin{equation}
			\left( \U^\perp + \V_1^\perp \right) \cap \V_1,
		\end{equation}
		which can be further simplified as follows
		\begin{align}
			\left( \U^\perp + \V_1^\perp \right) \cap \V_1 &= \left( V_1 \bigl( \U^\perp \bigr) \oplus \V_1^\perp \right) \cap \V_1 \\
			&=  V_1 \bigl( \U^\perp \bigr),
		\end{align}
		because $V_1 \bigl( \U^\perp \bigr)$ is a subspace of $\V_1$ and $\V_1^\perp$ is orthogonal to $\V_1$.
		Thus, we get the desired equality.
	\end{proof}

\subsection{General physical transformations}\label{sec:phys_transf}

	\Cref{sec:toy_objects} introduces the reversible transformations of nomic toy theory (which are identical to those of Spekkens' toy theory).
	A generic physical transformation may also involve discarding of subsytems, and as a result become irreversible.
	\begin{definition}\label{def:phys_transf}
		A \textbf{physical transformation} between two toy systems given by symplectic vector spaces $\V$ and $\W$ respectively is an affine symplectic map $\V \to \W$. 
	\end{definition}
	\begin{proposition}\label{prop:phys_transf}
		An affine map $f \colon \V \to \W$ is a physical transformation if and only if there is a decomposition $\V \cong \W \oplus \W^\perp$, a reversible transformation $\tilde{f} \in \mathrm{Sp}(\V)$, and an $w \in \W$ satisfying 
		\begin{equation}\label{eq:phys_transf}
			f(v) = W \circ \tilde{f}(v) + w,
		\end{equation}
		where $W \colon \V \to \W$ is the symplectic, orthogonal projection of $\W \oplus \W^\perp$ onto $\W$.
	\end{proposition}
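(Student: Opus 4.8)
The plan is to prove the two implications separately. Throughout I use the convention—forced by the fact that a physical transformation may reduce the dimension—that a linear \emph{symplectic map} $L \colon \V \to \W$ is one satisfying $L \, \Omega_\V \, L^T = \Omega_\W$. This is the condition preserved in the dimension-reducing direction, and it is consistent with the square invertible case of \cref{eq:sympl_group}, since the transpose of a symplectic matrix is again symplectic (so $M^T \Omega M = \Omega$ implies $M \Omega M^T = \Omega$).

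The backward implication is a direct computation. Suppose $\V \cong \W \oplus \W^\perp$ and $W \colon \V \to \W$ is the symplectic orthogonal projection, whose defining property gives $W \Omega_\V W^T = \Omega_\W$. For $\tilde f \in \mathrm{Sp}(\V)$ the linear part of $f$ is $L = W \circ \tilde f$, and using $\tilde f \Omega_\V \tilde f^T = \Omega_\V$ one obtains $L \Omega_\V L^T = W \tilde f \Omega_\V \tilde f^T W^T = W \Omega_\V W^T = \Omega_\W$. Hence $f = L + w$ is an affine symplectic map, i.e.\ a physical transformation in the sense of \cref{def:phys_transf}.

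For the forward implication, write $f(v) = L v + c$ with $L$ symplectic, and first record the structure of $K := \ker L$. Since $L \Omega_\V L^T = \Omega_\W$ is invertible, $\W = \im(\Omega_\W) = \im(L \Omega_\V L^T) \subseteq \im L$, so $L$ is surjective. Because $K^{\omega} = \Omega_\V \, \im(L^T)$, every element of $K^{\omega}$ has the form $\Omega_\V L^T u$; if such a vector lies in $K$ then $0 = L \Omega_\V L^T u = \Omega_\W u$ forces $u = 0$, so $K^{\omega} \cap K = \{0\}$. Thus $K$ is a symplectic subspace, $\V = K^{\omega} \oplus K$ is a symplectic-orthogonal decomposition, and $L$ restricts to a bijection $K^{\omega} \to \W$; a short calculation using $\Omega_\V^T \Omega_\V = \id$ shows this restriction is in fact a symplectic isomorphism $L|_{K^\omega} \colon K^{\omega} \xrightarrow{\sim} \W$.

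It remains to realise $L$ as a standard projection after a symplectic change of coordinates. I would fix a standard symplectic coordinate subspace $\W_0 \subseteq \V$ of dimension $\dim \W$, for which the orthogonal and symplectic complements coincide, so that the orthogonal projection $P_0 \colon \V \to \W_0$ is also the symplectic projection. Sending Darboux bases of $\W_0$ and $\W_0^\perp$ to Darboux bases of $K^{\omega}$ and $K$ yields $g \in \mathrm{Sp}(\V)$ with $g(\W_0) = K^{\omega}$ and $g(\W_0^\perp) = K$ (equivalently, one invokes transitivity of $\mathrm{Sp}(\V)$ on symplectic subspaces of a fixed dimension, the Lagrangian case of which is already used in the preceding section). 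Setting $\tilde f := g^{-1}$ and $\iota := (L|_{K^\omega}) \circ (g|_{\W_0}) \colon \W_0 \xrightarrow{\sim} \W$, the map $\iota \circ P_0 \circ \tilde f$ kills $K$ and agrees with $L$ on $K^{\omega}$, hence equals $L$; absorbing $\iota$ into the identification $\V \cong \W \oplus \W^\perp$ with $\W^\perp := \W_0^\perp$ makes $W := \iota \circ P_0$ the symplectic orthogonal projection and gives $f(v) = W \tilde f(v) + w$ with $w := c$. I expect the main obstacle to be the reconciliation of the two meanings of ``projection'': the symplectic complement $K^{\omega}$ that $L$ produces is generally \emph{not} the inner-product orthogonal complement of $K$, so one cannot simply project along $K$ and call the result orthogonal. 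The role of $\tilde f$ is precisely to rotate the canonical symplectic-orthogonal splitting $K^{\omega} \oplus K$ into a standard coordinate splitting in which the orthogonal and symplectic projections coincide, and checking that such a $\tilde f$ exists inside $\mathrm{Sp}(\V)$ (valid whenever $\mathcal F$ is a field) is the technical heart of the argument.
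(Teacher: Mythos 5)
Your proof is correct, and its skeleton agrees with the paper's: both arguments hinge on showing that $L$ is surjective and that $K = \ker L$ is a \emph{symplectic} subspace (so that $\V$ splits as $K^{\omega} \oplus K$ with $L$ restricting to a symplectic isomorphism $K^{\omega} \to \W$). Where you diverge is in how the reversible dilation is packaged. The paper simply identifies $\W$ with $K^{\omega} \subseteq \V$ via $L$, names $K$ as ``$\W^\perp$'', and takes $\tilde f = f|_{\W} \oplus \id_{\W^\perp}$, so that the projection in \cref{eq:phys_transf} is orthogonal \emph{with respect to the abstract direct-sum decomposition} $\V \cong \W \oplus \W^\perp$ — the tension you identify between the symplectic complement $K^{\omega}$ and the inner-product orthogonal complement $K^\perp$ never has to be confronted, because no claim is made that the splitting is orthogonal in the original coordinates of $\V$. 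You instead read ``orthogonal projection'' more literally, and therefore need the extra ingredient of transitivity of $\mathrm{Sp}(\V)$ on symplectic subspaces of fixed dimension to rotate $K^{\omega} \oplus K$ onto a standard coordinate splitting; your $\tilde f = g^{-1}$ is then a pure change of basis, with the symplectic isomorphism $L|_{K^{\omega}}$ absorbed into the identification of $\W_0$ with $\W$. Both constructions are valid and yield the same dilation statement; yours costs an appeal to symplectic Witt-type transitivity but makes explicit a point the paper's two-line argument glosses over, while the paper's reading makes the ``only if'' direction essentially a corollary of the first isomorphism theorem. One small remark: your verification of the ``if'' direction spells out the convention $L\,\Omega_\V\,L^T = \Omega_\W$ for rectangular symplectic maps, which the paper leaves implicit in calling the direction ``immediate''; this is worth stating, since \cref{eq:sympl_group} only defines symplectic maps in the square invertible case.
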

	\begin{proof}
		The ``if'' direction is immediate. 
		For the ``only if'' part, note that since the symplectic form is non-degenerate and the symplectic part of $f$ preserves it, the image of $f$ must coincide with $\W$.
		Thus, by the first isomorphism theorem, we have 
		\begin{equation}
			\W \cong \newfaktor{\V}{\mathrm{ker}(f)},
		\end{equation}
		which implies $\V = \W \oplus \W^\perp$ as symplectic vector spaces, since $\mathrm{ker}(f) = \W^\perp$ is necessarily a symplectic subspace.
		
		Now we can let $\tilde{f} \coloneqq f|_{\W} \oplus \id_{\W^\perp}$, which satisfies \cref{eq:phys_transf}.
	\end{proof}
	One of the consequences of \cref{prop:phys_transf} is that the dimension of $\V$ cannot be smaller than the dimension of $\W$.
	Another is that every physical transformation has a reversible dilation given by $v \mapsto \tilde{f}(v) + w$.

\subsection{Measurable variables in nomic toy theory are copyable}\label{sec:info_var}

	\begin{definition}\label{def:info_var}
		A variable $Z$ is an \textbf{information variable} if there is a reversible transformation $f \colon \V \oplus \V \to \V \oplus \V$ and an epistemic state $(\U, a)$ of $V$, satisfying
		\begin{equation}\label{eq:info_var}
			 Z(v) \oplus Z(v)  = (Z \oplus Z) \circ f(v + x) 
		\end{equation}
		for every ontic state $v \in \V$ and every ontic state $x$ in the support of $(\U, a)$.
	\end{definition}
	In other words, information variables carry information that can be copied.

	The following result says that a variable is copyable if and only if it is a collection of functionals whose Poisson brackets vanish.
	Together with \cref{thm:meas_var_is_obs}, it entails that a variable in nomic toy theory is measurable if and only if it is an information variable.
	\begin{proposition}\label{prop:copy}
		A variable is an information variable if and only if it is a Poisson variable.
	\end{proposition}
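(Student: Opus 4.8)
The plan is to prove the two implications separately, using the identity $Z\Omega Z^T=0$ as the common hinge and borrowing the block-matrix bookkeeping from the proof of \cref{thm:meas_var_is_obs}. Throughout I would take $Z\colon\V\to\Z$ to be linear, which (as with measurable variables in \cref{thm:meas_var_is_obs}) is the relevant case, and I would discard the affine shift of $f$ exactly as in \cref{lem:fixed_from_measured}, since a constant shift only contributes a constant term that does not affect the linear relations extracted below.

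For the ``if'' direction (Poisson $\Rightarrow$ information), I would build an explicit copier by reusing the symplectic matrix from Part~I of \cref{thm:meas_var_is_obs}. Take the first copy of $\V$ as the object $\S$ and the second copy, decomposed as $\Q\oplus\P$ with $\P=\Q^\perp$, as the ancilla, where $\Q\supseteq\im(Z^T)$ is a Lagrangian chosen so that $Z=\bar Z\circ\Pi_{\Q}$ factors through the orthogonal projection onto $\Q$; this is possible precisely because $\im(Z^T)$ is isotropic whenever $Z$ is Poisson. Picking a section $\sigma$ of the surjection $\bar Z$ and setting $G\coloneqq\sigma Z$, the map obtained from \cref{eq:meas_comp_obs} by substituting $G$ for $Z$ is symplectic because $G\Omega G^T=\sigma\,(Z\Omega Z^T)\,\sigma^T=0$. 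Preparing the ancilla in the epistemic state $(\im(Z^T),0)$, whose support is $\ker Z$, I would then verify directly that the first copy retains its value, $Z(v+\Omega G^T p)=Z(v)$ using $Z\Omega Z^T=0$, and that the second acquires it, $Z(w_2)=\bar Z\bigl(q+\sigma Z(v)\bigr)=Z(v)$ using $\bar Z(q)=Z(x)=0$ on the support. Hence $Z$ is an information variable.

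For the ``only if'' direction (information $\Rightarrow$ Poisson), write the linear part of $f$ in block form $F=(F_{ij})_{i,j\in\{1,2\}}$ relative to the two copies. Reading off the copy equation for each copy and each $x=a+u$ with $u\in\U^\perp$ yields the four identities $Z F_{11}=Z$, $Z F_{21}=Z$, and $Z F_{12}|_{\U^\perp}=Z F_{22}|_{\U^\perp}=0$; the last two say that the columns of $F_{12}^T Z^T$ and of $F_{22}^T Z^T$ all lie in $(\U^\perp)^\perp=\U$. Now I would use that the transpose of a symplectic matrix is symplectic, so $F(\Omega\oplus\Omega)F^T=\Omega\oplus\Omega$; its off-diagonal block gives $F_{11}\Omega F_{21}^T+F_{12}\Omega F_{22}^T=0$. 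Sandwiching this between $Z$ and $Z^T$ and using $ZF_{11}=Z$ together with $(ZF_{21})^T=Z^T$ turns it into
\begin{equation*}
Z\Omega Z^T + Z F_{12}\Omega F_{22}^T Z^T = 0.
\end{equation*}
The cross term is the symplectic pairing of vectors all lying in the isotropic subspace $\U$, hence vanishes, leaving $Z\Omega Z^T=0$, i.e.\ $Z$ is Poisson.

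I expect the reverse implication to be the main obstacle. Using $Z F_{21}=Z$ together with a diagonal symplectic relation only reproduces $Z\Omega Z^T=Z\Omega Z^T$ tautologically, so the decisive step is to combine \emph{both} copy conditions with the \emph{off-diagonal} symplectic identity and then exploit the isotropy of $\U$ to annihilate the residual term. A secondary point I would have to settle is the reduction to linear $Z$: since a nonlinear function of a Poisson variable (e.g.\ $q\mapsto q^2$ on a single mode) can also satisfy the copy equation, the statement is to be read for linear variables, with the argument above applied to the linear representative.
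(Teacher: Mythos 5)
Your proof is correct. The forward direction is essentially the paper's: both construct an explicit copier by substituting a suitable factorisation of $Z$ into the measurement matrix of \cref{eq:meas_comp_obs} and preparing the ancilla in an epistemic state supported on $\ker Z$ (the paper works with the isotropic subspace $\D = (\ker Z)^\perp$ and the isomorphism $K$ from the first isomorphism theorem, you with a Lagrangian extension $\Q \supseteq \im(Z^T)$ and a section $\sigma$ of $\bar{Z}$; these are interchangeable, and both hinge on $G \Omega G^T = 0$). The reverse direction is where you genuinely diverge. The paper disposes of it in one line: the copier, ``composed with applying $Z$ to the second instance,'' is claimed to be a transformation that measures $Z$, whence \cref{thm:meas_var_is_obs} applies; making this precise requires fitting the information-variable data into the measurement framework of \cref{def:meas_var}, in particular choosing a Lagrangian manifest subspace compatible with both $\U$ and $\im(Z^T)$, which the paper leaves implicit. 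Your direct computation\,---\,extracting $ZF_{11}=Z$, $ZF_{21}=Z$ and $ZF_{12}|_{\U^\perp}=ZF_{22}|_{\U^\perp}=0$ from the copy equation, feeding these into the off-diagonal block of $F(\Omega\oplus\Omega)F^T=\Omega\oplus\Omega$, and annihilating the residual term via the isotropy of $\U$\,---\,is self-contained, makes no appeal to the main theorem, and is arguably the more rigorous of the two routes. Your closing caveat about linearity is also well taken: as literally stated, a nonlinear function of a Poisson variable such as $q \mapsto q^2$ satisfies \cref{def:info_var} without being Poisson, so the equivalence must be read for linear variables, a restriction the paper's own reverse argument makes silently as well.
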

	\begin{proof}
		First of all, let us show that if $Z \colon \V \to \Z$ is a Poisson variable, then it is an information variable, i.e.\ that it is copyable.
		
		To this end, we denote the vector space $\bigl( \mathrm{ker}(Z) \bigr)^\perp \cong \linefaktor{\S}{\mathrm{ker}(Z)}$ by $\D$ and the isomorphism arising from the first isomorphism theorem by ${K \colon \D \to \mathrm{im}(Z)}$.
		Since $Z$ is a Poisson variable, $\D$ must be an isotropic subspace of $\V$.

	The copying of $Z$ is then achieved by the transformation ${M \colon \V \oplus \V \to \V \oplus \V}$ introduced in \cref{eq:meas_comp_obs} and using $K^{-1} Z \colon \V \to \D$ instead of its $M_{\Q \P}$ component. 
	It is symplectic for the same reason as in the proof of \cref{thm:meas_var_is_obs}, i.e.\ by virtue of $Z$ being a Poisson variable.
	Applying this $M$ to an arbitrary ready state input, written 
	\begin{equation}\label{eq:copy_input}
		\begin{pmatrix}
			v \\
			0 \\
			x
		\end{pmatrix}
	\end{equation}
	in the $\V \oplus \D \oplus \mathrm{ker}(Z)$ decomposition, gives 
	\begin{equation}\label{eq:copy_output}
		\begin{pmatrix}
			s + \Omega_\V Z^T \left( K^{-1} \right)^T x \\
			s_\D \\
			x
		\end{pmatrix},
	\end{equation}
	Here, $s_\D$ denotes $K^{-1} Z s$, which is the orthogonal projection of $s$ onto $\D$.
	Applying the variable $Z \oplus Z$ to the output state yields
	\begin{equation}\label{eq:copy_1}
		Z \left( s + \Omega_\V Z^T \left( K^{-1} \right)^T x \right ) = Z s
	\end{equation}	
	in the first instance of $\V$ and 
	\begin{equation}\label{eq:copy_2}
		Z ( s_\D + x ) = Z s
	\end{equation}
	in the second instance.
	\Cref{eq:copy_1} follows because of the property $Z \Omega_\V Z^T = 0$ satisfied by every Poisson variable.
	\Cref{eq:copy_2} is a consequence of $ Z x = 0$ (since $x$ is in the kernel of $Z$) and $Z s_\D = Z s$ (since $s_\D$ is the orthogonal projection of $s$ onto $\D$).
	As a result, we have shown that $Z$ is an information variable.
	
	Let us now prove the converse.
	Namely, we assume that $Z$ is an information variable.
	Then the copying transformation, composed with applying $Z$ to the second instance of $\V$, is a transformation that measures $Z$.
	Therefore, by \cref{thm:meas_var_is_obs}, $Z$ is a Poisson variable.
\end{proof}

\subsection{Pointer-preserving measurements}\label{sec:intrinsic_meas}
	For any value $a$ of the position $Q$ of a toy subject $A$, we can associate an epistemic state $(\Q,a)$ of Spekkens' toy theory, called a \textbf{pointer state} of $A$.
	The name comes from the fact that we think of the manifest variable of the toy subject also as a pointer of a measurement apparatus. 
	
	A specific class of measurements in nomic toy theory are those that preserve pointer states.
	As we show in \cref{rem:block_measurement} below, they have the special property that the contingent manifest variable is trivial. 
	In particular, for a pointer-preserving measurement, the post-measurement manifest variable of $A$ is independent of its initial momentum, as a result of the property $M_{\Q \P} = 0$.
	For this reason, they are transformations that measure the variable $M_{\Q \S}$ (see \cref{def:meas_var}).
	The characterisation of pointer-preserving measurements from \cref{rem:block_measurement} says that they satisfy an additional property, namely that $M_{\P \P}$ is non-degenerate. 
	This suggests that there may be other transformations besides pointer-preserving measurements that also measure the variable $M_{\Q \S}$.

\begin{definition}\label{def:intr_meas}
	A \textbf{pointer-preserving measurement} consists of an affine symplectic transformation $m \colon \S \oplus \A \to \S \oplus \A$ and a Lagrangian subspace $\Q$ of $\A$, such that for every $s \in \S$, the associated map $\A \to \A$ given by the composite (here, $A \colon \S \oplus \A \to \A$ denotes the projection map)
	\begin{equation}\label{eq:meas_pointer_map}
			\begin{tikzcd}[row sep=1ex]
			\A & {\S \oplus \A} & {\S \oplus \A} & \A \\
			a & {s + a} & {m(s + a)} & {A \circ m (s+a)}
			\arrow["{s \oplus \id}", from=1-1, to=1-2]
			\arrow["m", from=1-2, to=1-3]
			\arrow["{A}", from=1-3, to=1-4]
			\arrow[maps to, from=2-1, to=2-2]
			\arrow[maps to, from=2-2, to=2-3]
			\arrow[maps to, from=2-3, to=2-4]
		\end{tikzcd}
	\end{equation}
	maps (the support of) each pointer state $(\Q,a)$ to that of another pointer state $(\Q, a')$.

\end{definition}
\begin{proposition}[Characterisation of pointer-preserving measurements]
	Let $M$ be the matrix representation of the linear part of the transformation $m$ above.
	Then the following are equivalent:
	\begin{enumerate}
		\item $m$ and $\Q$ make up a pointer-preserving measurement.
		\item Given $q \in \A$, we have $M^T q \in \S \oplus \Q$ if and only if $q \in \Q$.
	\end{enumerate}
\end{proposition}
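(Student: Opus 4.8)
The plan is to show that both conditions~(1) and~(2) are equivalent to a single pair of constraints on the blocks of $M$ relative to the decomposition $\S \oplus \A = \S \oplus \Q \oplus \P$, namely that $M_{\Q\P} = 0$ and that $M_{\P\P}$ is non-degenerate; the desired equivalence $(1) \iff (2)$ then follows by transitivity. Throughout I write the affine part of $m$ as $m(x) = Mx + v$ and use the block notation of \cref{eq:block_measurement}. I will also use that, by \cref{eq:support}, the support of a pointer state $(\Q, a)$ is $\Q^\perp + a = \P + a$, since $\Q^\perp = \P$ in the chosen orthonormal basis.

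First I would analyse condition~(1). For fixed $s \in \S$, the map $a \mapsto A \circ m(s+a)$ of \cref{eq:meas_pointer_map} is affine with linear part the $\A\A$-block $M_{\A\A} \colon \A \to \A$, since the projection $A$ discards the $\S$-component while the shifts by $s$ and $v$ only affect the constant term. Hence the image of the support $\P + a_0$ of a pointer state $(\Q,a_0)$ is the coset $A\circ m(s+a_0) + M_{\A\A}(\P)$, whose direction space $M_{\A\A}(\P)$ does not depend on $s$ or $a_0$. Because every pointer state has support with direction space $\P$, condition~(1) holds for all $s$ and all pointer states precisely when $M_{\A\A}(\P) = \P$. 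Splitting $M_{\A\A}$ into its $\Q\P$- and $\P\P$-blocks, so that $M_{\A\A} p = M_{\Q\P}\, p + M_{\P\P}\, p$, I see that $M_{\A\A}(\P) \subseteq \P$ forces $M_{\Q\P} = 0$, whereupon $M_{\A\A}(\P) = M_{\P\P}(\P) = \P$ is exactly non-degeneracy of the square block $M_{\P\P}$. Thus (1) $\iff$ [$M_{\Q\P} = 0$ and $M_{\P\P}$ non-degenerate].

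Next I would treat condition~(2) by the same block bookkeeping. For $q = q_\Q + q_\P \in \A = \Q \oplus \P$, the $\P$-component of $M^T q$ is $M_{\Q\P}^T q_\Q + M_{\P\P}^T q_\P$, so $M^T q \in \S \oplus \Q$ is equivalent to the vanishing of this expression. Condition~(2) asserts that the solution set of this equation inside $\A$ is exactly $\Q$. Testing $q \in \Q$ (so $q_\P = 0$) forces $M_{\Q\P}^T q_\Q = 0$ for all $q_\Q$, i.e.\ $M_{\Q\P} = 0$; granting this, the equation reduces to $M_{\P\P}^T q_\P = 0$, whose only solution is $q_\P = 0$ precisely when $M_{\P\P}$ is non-degenerate. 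Conversely, these two block conditions plainly return condition~(2). Hence (2) $\iff$ [$M_{\Q\P} = 0$ and $M_{\P\P}$ non-degenerate], and combining this with the previous paragraph gives $(1) \iff (2)$.

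I expect the only delicate points to be bookkeeping rather than conceptual. One must verify that the affine shift $v$ and the object input $s$ genuinely drop out of the direction-space computation, so that the single block $M_{\A\A}$ controls everything and the quantifier over $s$ is harmless; and one must keep in mind that all pointer states share the common direction space $\P$, so that being a pointer-state support is a condition purely on that direction space. It is worth noting that symplecticity of $M$ is never actually invoked: the equivalence is a purely linear-algebraic fact about the blocks, which explains why no appeal to $M^T \Omega M = \Omega$ is needed.
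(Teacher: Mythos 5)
Your proof is correct, but it takes a genuinely different route from the paper's. The paper works in the ``dual'' picture: it tracks the subspace of known functionals of the pointer state through the composite of \cref{eq:meas_pointer_map}, invoking \cref{prop:maps_preserve_supp} (the transformation law $\U \mapsto (M^T)^{-1}(\U)$) and the marginalization formula of \cref{lem:ep_st_marginal}, and lands directly on the set equality $\Q = \bigl[ (M^T)^{-1}(\S + \Q) \bigr] \cap \A$, which is condition (ii) verbatim; this explains why the proposition is phrased as a statement about $M^T$. You instead stay in the ``primal'' picture of supports: you compute the direction space $M_{\A\A}(\P)$ of the image of $\P + a$ under the affine composite and reduce \emph{both} conditions to the pair $M_{\Q\P} = 0$ and $M_{\P\P}$ non-degenerate. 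Your treatment of condition (ii) is precisely the content of the paper's subsequent \cref{rem:block_measurement}, so in effect you have merged the proposition and the remark into a single argument pivoting on the block characterisation. Your route is more elementary and self-contained (no appeal to \cref{prop:maps_preserve_supp}), and your closing observation is a genuine bonus: the equivalence is purely linear-algebraic and uses neither symplecticity nor even invertibility of $M$, whereas the paper's route at least needs $M$ invertible to write $(M^T)^{-1}$. The paper's route, in exchange, reuses machinery already developed and produces condition (ii) in exactly the stated form without an intermediate characterisation. The only point worth flagging is the usual caveat (which the paper also glosses over) that ``surjective iff non-degenerate'' for the square block $M_{\P\P}$ relies on working over a field rather than a general $\mathbb{Z}_d$-module.
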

\begin{proof}
	Let us analyse how a pointer state $(\Q,a)$ is transformed by the respective maps in the composite \eqref{eq:meas_pointer_map}.
	After the first step, adjoining the ontic state $s$ to the possibilistic state $\P + a$ (note that we have $\P = \Q^T$ by definition of the momentum variable) leads to the possibilistic state\footnotemark{}
	\footnotetext{Note that here we use the same notation for possibilistic states as for epistemic states.
	That is, for any subspace $\U$ (which need not be isotropic), the possibilistic state associated to $(\U,a)$ is $\U^T + a$.
	This notation works for any possibilistic state that is an affine subspace.}%
	\begin{equation}
		\bigl( \S + \Q , s + a \bigr)   \qquad \text{with support} \qquad  \bigl(  \P  \cap \A \bigr) + a + s.
	\end{equation}
	By \cref{prop:maps_preserve_supp} (extended from epistemic states to all possibilistic state that are affine subspaces), the possibilistic state after the measurement interaction $m$ is
	\begin{equation}
		\left( \bigl( M^T \bigr)^{-1}  (\S + \Q) , \, m(s + a) \right), 
	\end{equation}
	which becomes
	\begin{equation}
		\left( \bigl[ \bigl( M^T \bigr)^{-1}  (\S + \Q) \bigr] \cap \A , \, A \circ m(s + a) \right)
	\end{equation}
	after the marginalization to $\A$ via the projection $A \colon \S \oplus \A \to \A$. 
	Thus, the condition that this is another pointer state amounts to 
	\begin{equation}
		\Q = \bigl[ \bigl( M^T \bigr)^{-1}  (\S + \Q) \bigr] \cap \A.
	\end{equation}
	The inclusion of $\Q$ within the right hand side is equivalent to 
	\begin{equation}
		q \in \Q \quad \implies \quad M^T q \in \S \oplus \Q.
	\end{equation}
	The reverse inclusion, on the other hand, is equivalent to
	\begin{equation}
		q \in \A  \text{ and } M^T q \in \S \oplus \Q  \quad \implies \quad  q \in \Q,
	\end{equation}
	so that the result follows.
\end{proof}
\begin{remark}\label{rem:block_measurement}
	Note that if we write $M$ in block form as in \cref{eq:block_measurement}, then condition (ii) says that 
	\begin{itemize}
		\item $M_{\Q \P } \colon  \P  \to \Q$ is equal to $0$, and 
		\item $M_{ \P   \P } \colon  \P  \to  \P $ is non-degenerate.
	\end{itemize}
\end{remark}
\begin{proof}
	First, consider an arbitrary $q \in \Q$, so that in this block form we have
	\begin{equation}
		M^T q = M^T \colvec{0,q,0} = \colvec{M_{\Q\S}^T q, M_{\Q\Q}^T q, M_{\Q  \P } q}.
	\end{equation}
	Thus $M^T q$ is an element of $\S \oplus \Q$ if and only if $M_{\Q  \P }$ vanishes.
	
	On the other hand, consider an arbitrary $a \in \A$ with components $a_\Q$ and $a_{\P }$ in $\Q$ and $ \P $ respectively.
	Then, using $M_{\Q  \P } = 0$, we have 
	\begin{equation}
		M^T a = M^T \colvec{0, a_\Q, a_{ \P }} = \colvec{M_{\Q\S}^T \, a_\Q + M_{ \P  \S}^T \, a_{ \P }, M_{\Q\Q}^T \, a_\Q + M_{ \P  \Q}^T \, a_{ \P }, M_{ \P   \P }^T \, a_{ \P }}, 
	\end{equation}
	so that the implication
	\begin{equation}
		M^T a \in \S \oplus \Q  \quad \implies \quad  a \in \Q
	\end{equation}
	becomes 
	\begin{equation}
		M_{ \P   \P }^T \, a_{\P } = 0  \quad \implies \quad  a_{ \P } = 0.
	\end{equation}
	This implication is satisfied if and only if $M_{\P\P }^T$ is non-degenerate, which is equivalent to $M_{ \P   \P }$ itself being non-degenerate.
\end{proof}

\end{document}